\newcounter{thm}
\theoremstyle{plain}
\newtheorem{theorem}[thm]{Theorem}
\newtheorem{lemma}[thm]{Lemma}
\newtheorem{fact}[thm]{Fact}
\renewcommand\paragraph{\@startsection{paragraph}{4}{\z@}%
                                      {0.4em}%
                                      {-1em}%
                                      {\normalfont\normalsize\bfseries}}
\newtheorem*{rep@theorem}{\rep@title}
\newcommand{\newreptheorem}[2]{%
\newenvironment{rep#1}[1]{%
 \def\rep@title{#2 \ref{##1}}%
 \begin{rep@theorem}}%
 {\end{rep@theorem}}}
\newcommand{\mycase}[1]{\smallskip\noindent{\bf Case #1:}}
\newcommand{\etal}{{\em et al.}}
\def\OPT{\textsf{OPT}\xspace}
\def\ALG{\textsf{ALG}\xspace}
\def\SIZE{\textsf{SIZE}\xspace}
\newcommand{\polylog}{\operatorname{polylog}}
\newcommand{\poly}{\operatorname{poly}}
\newcommand{\half}{\frac{1}{2}}
\let\eps\varepsilon
\newcommand{\oneovereps}{\frac{1}{\eps}}
\newcommand{\oneoverepssquared}{\frac{1}{\eps^2}}
\newcommand{\E}{\mathbb{E}}
\renewcommand{\O}{\mathcal{O}}
\newcommand{\Otilde}{\widetilde{\O}}
\newcommand{\calS}{\mathcal{S}}
\newcommand{\IB}{I_{\mathrm{B}}}
\newcommand{\NB}{N_{\mathrm{B}}}
\newcommand{\IR}{I_{\mathrm{R}}}
\newcommand{\IC}{I_{\mathrm{C}}}
\title{Streaming Algorithms for Bin Packing and Vector Scheduling}
\author{Graham Cormode and Pavel Vesel\'{y}}
\affil{Department of Computer Science, University of Warwick, Coventry, UK.
\texttt{\{G.Cormode,Pavel.Vesely\}@warwick.ac.uk}.}
\date{}
\begin{document}
	
\maketitle

\begin{abstract}
Problems involving the efficient arrangement of simple objects, as captured by bin
packing and makespan scheduling,
are fundamental tasks in combinatorial optimization.
These are well understood in the traditional online and offline cases,
but have been less well-studied when the volume of the input is truly
massive, and cannot even be read into memory.
This is captured by the streaming model of computation, where the aim
is to approximate the cost of the solution in one pass over the data,
using small space. As a result, streaming algorithms produce
concise input summaries that approximately preserve the optimum value.

We design the first efficient streaming algorithms
for these fundamental problems in combinatorial optimization.
For \textsc{Bin Packing}, we provide a streaming asymptotic $1+\eps$-approximation
with $\Otilde\left(\oneovereps\right)$ memory,
where $\Otilde$ hides logarithmic factors.
Moreover, such a space bound is essentially optimal. 
Our algorithm implies a streaming $d+\eps$-approximation for \textsc{Vector Bin Packing}
in $d$ dimensions, running in space $\Otilde\left(\frac{d}{\eps}\right)$.
For the related \textsc{Vector Scheduling} problem, we show how to construct
an input summary in space $\Otilde(d^2\cdot m / \eps^2)$ that preserves
the optimum value up to a factor of $2 - \frac{1}{m} +\eps$,
where $m$ is the number of identical machines.
\end{abstract}

\section{Introduction}

The streaming model captures
many scenarios when we must process very large volumes of data, which cannot
fit into the working memory.
The algorithm makes one or more passes over the data with a limited memory,
but does not have random access to the data.
Thus, it needs to extract a concise summary of the huge input,
which can be used to
approximately answer
the problem under consideration.
The main aim is to provide a good trade-off between the space used 
for processing the input stream (and hence, the summary size) and
the accuracy of the (best possible) answer computed from the summary. 
Other relevant parameters are the time and space needed to make the
estimate, and the number of passes,
which ideally should be equal to one.

While there have been many effective streaming algorithms designed for
a range of problems in statistics, optimization, and graph algorithms
(see surveys by Muthukrishnan~\cite{muthukrishnan05_data_streams} and
McGregor~\cite{mcGregor14_graph_streams}), there has been little
attention paid to the core problems of packing and scheduling.
These are fundamental abstractions, which form the basis of many
generalizations and extensions~\cite{coffman13,christensen17_survey_multidim_BP}.
In this work, we present the first efficient algorithms for packing
and scheduling that work in the streaming model.

A first conceptual challenge is to resolve what form of answer is
desirable in this setting.
If items in the input are too many to store, then it is also unfeasible to
require a streaming algorithm to provide an explicit description of
how each item is to be handled.
Rather, our objective is for the algorithm to provide the cost of the
solution, in the form of the number of bins or the duration of the
schedule.
Moreover, many of our algorithms can provide a concise {\em description}
of the solution, which describes in outline how the jobs are treated
in the design.

A second issue is that the problems we consider, even in their
simplest form, are NP-hard.
The additional constraints of streaming computation do not erase the
computational challenge.
In some cases,
our algorithms proceed by adopting and extending known polynomial-time
approximation schemes for the offline versions of the problems,
while in other cases, we come up with new approaches.
The streaming model effectively emphasizes the question of how compactly can the
input be summarized to allow subsequent approximation of the problem
of interest.
Our main results show that in fact the inputs for many of our problems of interest
can be ``compressed'' to very small intermediate descriptions which
suffice to extract near-optimal solutions for the original input.
This implies that they can be solved in scenarios which are storage or
communication constrained. 

We proceed by formalizing the streaming model, after which we 
summarize our results. 
We continue by presenting related work, and contrast with the online
setting. 

\subsection{Problems and Streaming Model}

\paragraph{Bin Packing.}
The \textsc{Bin Packing} problem is
defined as follows:
The input consists of $N$ items
with sizes $s_1, \dots, s_N$ (each between 0 and 1), which need to be packed
into bins of unit capacity. That is, we seek a partition of the set of items $\{1, \dots, N\}$
into subsets $B_1, \dots, B_m$, called bins, such that for any bin $B_i$, it holds that $\sum_{j\in B_i} s_j \le 1$.
The goal is to minimize the number $m$ of bins used.

We also consider the natural generalization to \textsc{Vector Bin Packing},
where the input consists of $d$-dimensional vectors, with the value of each
coordinate between 0 and 1 (i.e., the scalar items $s_i$ are replaced
with vectors $\mathbf{v^i}$).
The vectors need to be packed into $d$-dimensional bins with
unit capacity in each dimension, we thus require
that
$\|\sum_{\mathbf{v}\in B_i} \mathbf{v}\|_\infty \le 1$ (where, the
infinity norm $\|\mathbf{v}\|_\infty = \max_i \mathbf{v}_i$).

\paragraph{Scheduling.}
The \textsc{Makespan Scheduling} problem is
 closely related to \textsc{Bin Packing} but,
instead of filling bins with bounded capacity, we try to balance the loads assigned
to a fixed number of bins.
Now we refer to the input as comprising a set
of \emph{jobs}, with each job $j$ defined by its processing time $p_j$.
Our goal is to assign each job on one of $m$ identical machines
to minimize the \emph{makespan}, which is the maximum load over all machines.

In \textsc{Vector Scheduling},
a job is described not only by its processing time, but also by, say, memory
or bandwidth requirements. 
The input is thus a set of jobs, each job $j$ characterized by a vector $\mathbf{v^j}$.
The goal is to assign each job into one of $m$ identical machines
such that the maximum load over all machines and dimensions is minimized.

\paragraph{Streaming Model.}
In the streaming scenario, the algorithm receives the input as a sequence of items,
called the input stream.
We do not assume that the stream is ordered in any particular way (e.g., randomly
or by item sizes), so our algorithms must work for arbitrarily ordered streams.
The items arrive one by one and upon receiving each item, the algorithm updates its memory
state. A streaming algorithm is required to use space sublinear in the length of the stream, 
ideally just $\polylog(N)$, while it processes the stream. After the last item arrives,
the algorithm computes its estimate of the optimal value, and the space or time used
during this final computation is not restricted.

For many natural optimization problems
outputting some explicit solution of the problem is not possible owing to the memory restriction
(as the algorithm can store only a small subset of items).
Thus the goal is to find a good approximation of the \emph{value} of an offline optimal solution.
Since our model does not assume that item sizes are integers,
we express the space complexity not in bits, but in words (or memory cells),
where each word can store any number from the input; a linear
combination of numbers  from the input; 
or any integer with $\O(\log N)$ bits (for counters, pointers, etc.).

\subsection{Our Results}

\paragraph{Bin packing.}
In Section~\ref{sec:bp}, we present a streaming algorithm for \textsc{Bin Packing}, which outputs
an asymptotic $1+\eps$-approximation of $\OPT$, the optimal number of bins, using 
$\O\left(\frac{1}{\eps} \cdot \log \frac{1}{\eps} \cdot \log \OPT\right)$ memory.
This means that the algorithm uses
at most $(1+\eps)\cdot \OPT + o(\OPT)$ bins, and in our case, the additive $o(\OPT)$ term
is bounded by the space used.
The novelty of our contribution is to combine 
a data structure that approximately tracks 
all quantiles in a numeric stream~\cite{greenwald01_quantile_summaries}
with techniques for approximation schemes~\cite{fernandez81,karmarkar82}.
We show that we can improve upon the $\log \OPT$ factor in the space complexity
if randomization is allowed or if item sizes are drawn from a bounded-size set of real numbers.
On the other hand, we argue that our result is close to optimal, up to a factor of $\O\left(\log \oneovereps\right)$,
if item sizes are accessed only by comparisons (including comparisons with some fixed constants).
Thus, one cannot get an estimate with at most $\OPT + o(\OPT)$ bins by
a streaming algorithm, unlike in the offline setting~\cite{hoberg17_logarithmic}.
The hardness emerges from the space complexity of the quantiles
problem in the streaming model. 

For \textsc{Vector Bin Packing}, 
we design a streaming asymptotic $d+\eps$-approximation algorithm
running in space
$\O\left(\frac{d}{\eps} \cdot \log \frac{d}{\eps} \cdot \log \OPT\right)$;
see Section~\ref{app:vbp}.
We remark that if vectors are rounded into a sublinear number of types,
then better than $d$-approximation is not possible~\cite{bansal16_vector_BP}.

\paragraph{Scheduling.}
For \textsc{Makespan Scheduling}, one can obtain a straightforward streaming
$1+\eps$-app\-ro\-xi\-ma\-tion\footnote{Unlike for \textsc{Bin Packing},
an additive constant or even an additive $o(\OPT)$ term
does not help in the definition of the approximation ratio,
since we can scale every number on input by any $\alpha > 0$ and $\OPT$
scales by $\alpha$ as well.
} with space of only $\O(\oneovereps\cdot \log \oneovereps)$
by rounding sizes of suitably large jobs to powers of $1+\eps$ and counting
the total size of small jobs.
In a higher dimension, it is also possible to get a streaming
$1+\eps$-approximation, by the rounding introduced by Bansal \etal~\cite{bansal16_apx_vector_sch_eptas}.
However, the memory required for this algorithm is exponential in $d$,
precisely of size $\O\left(\left(\frac{1}{\eps} \log \frac{d}{\eps}\right)^d\right)$,
and thus only practical when $d$ is a very small constant.
Moreover, such a huge amount of memory is needed even if the number $m$ of machines
(and hence, of big jobs) is small
as the algorithm rounds small jobs into exponentially many types.
See Section~\ref{app:vsch-roudingAlgs} for more details.

In case $m$ and $d$ make this feasible,
we design a new streaming $\left(2-\frac{1}{m}+\eps\right)$-app\-ro\-xi\-ma\-tion
with $\O\left(\oneoverepssquared \cdot d^2\cdot m\cdot \log \frac{d}{\eps}\right)$ memory,
which implies a $2$-approximation streaming algorithm running in space $\O(d^2\cdot m^3\cdot \log dm)$.
We thus obtain a much better approximation than for \textsc{Vector Bin Packing}
with a reasonable amount of memory
(although to compute the actual makespan from our input summary,
it takes time doubly exponential in $d$~\cite{bansal16_apx_vector_sch_eptas}).
Our algorithm is not based on rounding, as in the aforementioned algorithms,
but on combining small jobs into containers, and the approximation guarantee 
of this approach is at least $2-\frac{1}{m}$, which we demonstrate by an example.
We describe the algorithm for \textsc{Vector Scheduling} in Section~\ref{sec:vsch}.

\section{Related Work}

We give an overview of related work in offline,
online, and sublinear algorithms, and 
highlight the differences between online and streaming algorithms.
Recent surveys of Christensen \etal~\cite{christensen17_survey_multidim_BP}
and Coffman \etal~\cite{coffman13} have a more comprehensive overview.

\subsection{Bin Packing}

\paragraph{Offline approximation algorithms.}
\textsc{Bin Packing} is an NP-complete problem and indeed it is
NP-hard even to decide
whether two bins are sufficient or at least three bins are necessary.
This follows by a simple reduction from the \textsc{Partition} problem
and presents the strongest inapproximability to date.
Most work in the offline model focused on providing \emph{asymptotic} $R$-approximation algorithms,
which use at most $R\cdot \OPT + o(\OPT)$ bins.
In the following, when we refer to an approximation for
\textsc{Bin Packing} we implicitly mean the asymptotic approximation.
The first \emph{polynomial-time approximation scheme} (PTAS), that is, a $1+\eps$-approximation for any $\eps > 0$,
was given by Fernandez de la Vega and Lueker~\cite{fernandez81}.
Karmarkar and Karp~\cite{karmarkar82}
provided an algorithm which returns a solution with $\OPT + \O(\log^2 \OPT)$ bins.
Recently, Hoberg and Rothvo{\ss}~\cite{hoberg17_logarithmic} proved it is possible to find a solution
with $\OPT + \O(\log \OPT)$ bins in polynomial time.

The input for \textsc{Bin Packing} can be described by
$N$ numbers, corresponding to item sizes.
While in general these sizes may be distinct,
in some cases the input description can be compressed significantly 
by specifying the number of items of each size in the input.
Namely, in the \textsc{High-Multiplicity Bin Packing} problem, the input is
a set of pairs $(a_1, s_1), \dots, (a_\sigma, s_\sigma)$, where for $i=1,\dots,\sigma$, $a_i$ is
the number of items of size $s_i$ (and all $s_i$'s are distinct). 
Thus, $\sigma$ encodes the number of item sizes, and hence the size of
the description.
The goal is again to pack these items into bins, using as few bins as possible.
For constant number of sizes, $\sigma$,
Goemans and Rothvo{\ss}~\cite{goemans14_HMBP_polynomial} recently gave an exact algorithm
for the case of rational item sizes
running in time $\displaystyle{(\log \Delta)^{2^{\O(\sigma)}}}$, where $\Delta$ is the largest multiplicity of an item
or the largest denominator of an item size, whichever is the greater.

While these algorithms provide satisfying theoretical guarantees, 
simple heuristics are often adopted in practice to provide a
``good-enough'' performance.
\textsc{First Fit}~\cite{johnson74_fast_algs_bp}, which
puts each incoming item into the first bin where it fits and opens a new bin only
when the item does not fit anywhere else
achieves $1.7$-approximation~\cite{dosa13_FF_tight_analysis}.
For the high-multiplicity variant, using an LP-based Gilmore-Gomory cutting stock
heuristic~\cite{gilmore61_LP_cutting_stock,gilmore61_LP_cutting_stock2}
gives a good running time in practice~\cite{applegate03_cutting_stock}
and produces a solution with at most $\OPT + \sigma$ bins.
However, neither of these algorithms adapts well to the
streaming setting with possibly distinct item sizes.
For example, \textsc{First Fit} has to remember the remaining capacity
of each open bin, which in general can require space proportional to
$\OPT$. 

\textsc{Vector Bin Packing} proves to be substantially harder to approximate, even in a constant dimension.
For fixed $d$, Bansal, Eli\'{a}\v{s}, and Khan~\cite{bansal16_vector_BP} showed 
an approximation factor of $\approx 0.807 + \ln (d+1) + \eps$.
For general $d$, a relatively simple algorithm based on an LP
relaxation, due to Chekuri and
Khanna~\cite{chekuri04_multidim_packing},
remains the best known, with an approximation guarantee of $1+\eps d+
\O(\log  \oneovereps)$. 
The problem is APX-hard even for $d=2$~\cite{woeginger97_no_ptas_for_vector_BP}, and 
cannot be approximated within a factor better than $d^{1-\eps}$
for any fixed $\eps > 0$~\cite{christensen17_survey_multidim_BP} if $d$ is arbitrarily large.
Hence, our streaming $d+\eps$-approximation for \textsc{Vector Bin Packing}
asymptotically achieves the offline lower bound.

\paragraph{Sampling-based algorithms.}
Sublinear-time approximation schemes constitute a
model related to, but distinct from, streaming algorithms. 
Batu, Berenbrink, and Sohler~\cite{batu09_sublinear} provide an algorithm
that takes $\Otilde\left(\sqrt{N}\cdot \poly(\oneovereps)\right)$
weighted samples, meaning that the probability of sampling an item is proportional to its size.
It outputs an asymptotic $1+\eps$-approximation of $\OPT$.
If uniform samples are also available, then sampling $\Otilde\left(N^{1/3}\cdot \poly(\oneovereps)\right)$ items
is sufficient.
These results are tight, up to a $\poly(\oneovereps, \log N)$ factor.
Later, Beigel and Fu~\cite{beigel12_dense_sublinear} focused on uniform sampling of items, 
proving that $\widetilde{\Theta}(N / \SIZE)$ samples are sufficient and necessary,
where $\SIZE$ is the total size of all items.
Their approach implies a streaming approximation scheme
by uniform sampling of the substream of big items.
However, the space complexity in terms of $\oneovereps$ is not stated in the paper, 
but we calculate this to be 
$\Omega\left(\eps^{-c}\right)$ for a constant $c\ge 10$.
Moreover, $\Omega(\oneoverepssquared)$ samples are clearly needed to estimate
the number of items with size close to $1$. 
Note that our approach is deterministic and
substantially different than taking a random sample from the stream.

\paragraph{Online algorithms.}
Online and streaming algorithms are similar in the sense that they are required
to process items one by one.
However,
an online algorithm
must make all its decisions immediately --- it must fix the placement 
of each incoming item on arrival.\footnote{Relaxations which allow a
  limited amount of ``repacking'' have also been considered~\cite{feldkord18_dynamic_BP_repacking}.}
A streaming algorithm can postpone such decisions to the very end,
but is required to keep its memory small, whereas an online algorithm may remember 
all items that have arrived so far.
Hence, online algorithms apply in the streaming setting
only when they have small space cost, including the space needed to store the
solution constructed so far.
The approximation ratio of online algorithms is quantified by the
\emph{competitive ratio}.

For \textsc{Bin Packing}, the best possible competitive ratio is substantially
worse than what we can achieve offline or even in the streaming setting.
Balogh~\etal~\cite{balogh18_online_bp} designed 
an asymptotically $1.5783$-competitive algorithm, while
the current lower bound on the asymptotic competitive ratio is $1.5403$~\cite{balogh12_online_bp_LB}.
This (relatively complicated) online algorithm
is based on the \textsc{Harmonic} algorithm~\cite{lee85_harmonic},
which for some integer $K$ classifies items into size groups 
$(0,\frac1K], (\frac1K, \frac{1}{K-1}], \dots, (\half, 1]$.
It packs each group separately by \textsc{Next Fit}, 
keeping just one bin open, which is closed whenever the next item does not fit.
Thus \textsc{Harmonic} can run in memory of size $K$, unlike most other online algorithms
which require maintaining the levels of all bins opened so far.
Its competitive ratio tends to approximately $1.691$ as $K$ goes to infinity.
Surprisingly, this is also the best possible ratio if only a bounded
number of bins is allowed to be open for an online algorithm~\cite{lee85_harmonic},
which can be seen as the intersection of online and streaming algorithms.

For \textsc{Vector Bin Packing}, the best known competitive ratio
of $d+0.7$~\cite{garey76_resource_constr_sch_as_vbp} is achieved by \textsc{First Fit}.
A lower bound of $\Omega(d^{1-\eps})$ on the competitive ratio was
shown by Azar \etal~\cite{azar13_vbp_tight_bounds}.
It is thus currently unknown whether or not online algorithms outperform
streaming algorithms in the vector setting.

\subsection{Scheduling}

\paragraph{Offline approximation algorithms.}
\textsc{Makespan Scheduling} is strongly NP-complete~\cite{garey79_computers_intractability}, which 
in particular rules out the possibility of a PTAS with time complexity $\poly(\oneovereps, n)$.
After a sequence of improvements, Jansen, Klein, and Verschae~\cite{jansen16_eptas_scheduling}
gave a PTAS with time complexity $2^{\Otilde(1/\eps)} + \O(n \log n)$,
which is essentially tight under the Exponential Time Hypothesis
(ETH)~\cite{chen14_scheduling_LBs}.

For constant dimension $d$, \textsc{Vector Scheduling} also admits a PTAS, 
as shown by Chekuri and Khanna~\cite{chekuri04_multidim_packing}.
However, the running time is of order $n^{(1/\eps)^{\Otilde(d)}}$. 
The approximation scheme for a fixed $d$ was improved to an efficient
PTAS, namely to an algorithm running in time $2^{(1/\eps)^{\Otilde(d)}} + \O(dn)$, by Bansal~\etal~\cite{bansal16_apx_vector_sch_eptas},
who also showed that the running time cannot be significantly improved under ETH.
In contrast 
our streaming $\poly(d, m)$-space algorithm computes
an input summary maintaining $2$-approximation of the original input.
This respects the lower bound, since to compute the actual
makespan from the summary, we still need to execute an offline algorithm, with running time doubly exponential in $d$.
The  best known approximation ratio for large $d$ is
$\O(\log d / (\log \log d))$~\cite{Harris13_moser-tardos_framework,im19_tight_bounds_online_vector_sch},
while $\alpha$-approximation is not possible 
in polynomial time for any constant $\alpha > 1$ and arbitrary $d$, unless NP = ZPP.

\paragraph{Online algorithms.}
For the scalar problem, the optimal competitive ratio is known to lie in the interval
$(1.88, 1.9201)$~\cite{albers99_better_bounds_online_sch,gormley00_online_sch_LB,rudin01_phd_thesis,fleischer00_online_sched_algorithm},
which is substantially worse than what can be done by a simple
streaming $1+\eps$-approximation in space $\O(\oneovereps\cdot \log \oneovereps)$.
Interestingly, for \textsc{Vector Scheduling},
the algorithm by Im~\etal~\cite{im19_tight_bounds_online_vector_sch} with ratio $\O(\log d / (\log \log d))$
actually works in the online setting as well and needs space $\O(d\cdot m)$ only during its execution
(if the solution itself is not stored),
which makes it possible to implement it in the streaming setting.
This online ratio cannot be improved as there is a lower bound of
$\Omega(\log d / (\log \log d))$~\cite{im19_tight_bounds_online_vector_sch,azar18_randomized_vector_load_balancing},
whereas in the streaming setting we can achieve a $2$-approximation with a reasonable memory
(or even $1+\eps$ for a fixed $d$).
If all jobs have sufficiently small size, we improve the analysis
in~\cite{im19_tight_bounds_online_vector_sch} and show that the online algorithm
achieves $1+\eps$-approximation; see Section~\ref{sec:vsch}.

\section{Bin Packing}\label{sec:bp}

\paragraph{Notation.} 
For an instance $I$, let $N(I)$ be the number of items in $I$,
let $\SIZE(I)$ be the total size of all items in $I$, and
let $\OPT(I)$ be the number of bins used in an 
optimal solution for $I$. Clearly, $\SIZE(I) \le \OPT(I)$.
For a bin $B$, let $s(B)$ be the total size of items in $B$.
For a given $\eps > 0$, we use $\Otilde(f(\oneovereps))$
to hide factors logarithmic in $\oneovereps$ and $\OPT(I)$, i.e.,
to denote $\O\big(f(\oneovereps) \cdot \polylog \oneovereps \cdot \polylog \OPT(I)\big)$.

\paragraph{Overview.}
We first briefly describe the approximation scheme of Fernandez de la Vega and Lueker~\cite{fernandez81},
whose structure we follow in outline.
Let $I$ be an instance of \textsc{Bin Packing}.
Given a precision requirement $\eps > 0$, we say that an item is \textit{small}
if its size is at most $\eps$; otherwise, it is \textit{big}.
Note that there are at most $\oneovereps \SIZE(I)$ big items.
The rounding scheme in~\cite{fernandez81}, called ``linear grouping'', works as follows:
We sort the big items by size non-increasingly and
divide them into groups of $k = \lfloor \eps\cdot \SIZE(I) \rfloor$ items
(the first group thus contains the $k$ biggest items).
In each group, we round up the sizes of all items to the size of
the biggest item in that group.
It follows that the number of groups and thus the number of distinct
item sizes (after rounding) is bounded by $\lceil\oneoverepssquared\rceil$.
Let $\IR$ be the instance of \textsc{High-Multiplicity Bin Packing}
consisting of the big items with rounded sizes.
It can be shown that $\OPT(\IB)\le \OPT(\IR)\le (1 + \eps)\cdot \OPT(\IB)$, where $\IB$
is the set of big items in $I$
(we detail a similar argument in Section~\ref{sec:rounding}).
Due to the bounded number of distinct item sizes, we can find a
close-to-optimal solution for $\IR$ efficiently.
We then translate this solution into a packing for $\IB$ in the natural way. 
Finally, small items are filled greedily (e.g., by First Fit) and it can be shown
that the resulting complete solution for $I$ is a $1 + \O(\eps)$-approximation.

Karmarkar and Karp~\cite{karmarkar82} proposed an improved rounding scheme, called ``geometric grouping''.
It is based on the observation that item sizes close to $1$
should be approximated substantially better than item sizes close to $\eps$.
We present a version of such a rounding scheme in Section~\ref{sec:bp-betterRounding}.

Our algorithm follows a similar outline with two stages (rounding and finding a solution for the rounded instance), but
working in the streaming model brings two challenges:
First, in the
rounding stage, we need to process the stream of items and output
a rounded high-multiplicity instance with few item sizes that are not too small,
while keeping only a small number of items in the memory. 
Second, the rounding of big items needs to be done carefully so that not much space is ``wasted'',
since in the case when the total size of small items is relatively large, we argue
that our solution is close to optimal by showing that the bins are nearly full on average.

\paragraph{Input summary properties.}
More precisely, we fix some $\eps > 0$ that is used to control the approximation
guarantee.
During the first stage, our algorithm has one variable which
accumulates the total size of all small items in the input stream, i.e., those of size at most $\eps$.
Let $\IB$ be the substream consisting of all big items. 
We process $\IB$ and output a rounded high-multiplicity instance $\IR$
with the following properties:

\begin{enumerate}[nosep,label=(P\arabic*)]
\item \label{p:dSizes} There are at most $\sigma$ item sizes in $\IR$, 
all of them larger than $\varepsilon$,
and the memory required for processing $\IB$ is $\O(\sigma)$.
\item \label{p:roudingUp} The $i$-th biggest item in $\IR$ is at least as large
as the $i$-th biggest item in $\IB$ (and the number of items in $\IR$ is the same as in $\IB$).
This immediately implies that
\begin{itemize}[nosep]
\item Any packing of $\IR$ can be used as a packing of $\IB$ (in the same number of bins),
\item $\OPT(\IB)\le \OPT(\IR)$, and
\item $\SIZE(\IB) \le \SIZE(\IR)$.
\end{itemize}
\item \label{p:boundOnOPT} $\OPT(\IR)\le (1 + \eps)\cdot \OPT(\IB) + \O(\log \oneovereps)$.
\item \label{p:boundOnSIZE}$\SIZE(\IR) \le (1 + \eps)\cdot \SIZE(\IB)$.
\end{enumerate}

In words, \ref{p:roudingUp} means that we are rounding item sizes up
and, together with \ref{p:boundOnOPT}, it implies that the optimal
solution for the rounded instance
approximates $\OPT(\IB)$ well.
The last property is used in the case when the total size of 
small items constitutes a large fraction of the total size of all items.
Note that $\SIZE(\IR) - \SIZE(\IB)$ can be thought of as bin space ``wasted'' 
by rounding.

Observe that the succinctness of the rounded instance depends on 
$\sigma$. First, we show a streaming algorithm for rounding with
$\sigma= \Otilde(\oneoverepssquared)$. Then we improve upon it and
give an algorithm with $\sigma = \Otilde(\oneovereps)$, which is essentially
the best possible, while guaranteeing an error of $\eps\cdot \OPT(\IB)$
introduced by rounding (elaborated on in Section~\ref{sec:BPandQS}).
More precisely, we show the following:

\begin{lemma}\label{lem:rounding}
Given a steam $\IB$ of big items, there is a deterministic streaming algorithm that outputs
a \textsc{High-Multiplicity Bin Packing} instance satisfying \ref{p:dSizes}-\ref{p:boundOnSIZE}
with $\sigma = \O\left(\frac{1}{\eps}\cdot \log \frac{1}{\eps}\cdot \log \OPT(\IB)\right)$.
\end{lemma}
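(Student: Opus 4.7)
The plan is to streaming-ify a geometric-grouping rounding in the spirit of Karmarkar--Karp, employing a separate Greenwald--Khanna (GK) quantile summary per size class. Partition the big items (those of size $>\eps$) into $k+1$ geometric classes $C_j = (2^{-j-1}, 2^{-j}]$ for $j = 0, 1, \ldots, k$ with $k = \lceil \log_2(1/\eps) \rceil$, so every big item lies in exactly one class. As each big item arrives, route it to its class's GK summary $Q_j$ run with accuracy parameter $\delta = \eps/c$ for a suitable small constant $c$. Since GK uses $\O((1/\delta)\log(\delta n_j))$ memory, summing over the $\O(\log(1/\eps))$ classes and using $n_j \le \OPT(\IB)/\eps$ gives total memory $\O((1/\eps)\log(1/\eps)\log\OPT(\IB))$, matching the required bound on $\sigma$; the number of distinct rounded sizes emitted below is bounded by the same quantity.

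\textbf{Emitting $\IR$.} At the end of the stream, each $Q_j$ is a sorted list of tuples $(v_1,g_1),\dots,(v_{T_j},g_{T_j})$ with $v_1<\cdots<v_{T_j}$, where $g_i$ counts items whose true value lies in $(v_{i-1},v_i]$ and GK's invariant guarantees $g_i\le 2\delta n_j$. I construct $\IR$ by emitting, for every class and every tuple, $g_i$ copies of $v_i$. The total item count is preserved, and since each item is assigned to a tuple whose representative $v_i$ is at least its true value, the rounded instance dominates $\IB$ item-for-item, establishing (P2).

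\textbf{Error bounds (P3) and (P4).} Within each class $C_j$, sort the items of $\IB$ and of $\IR$ in decreasing order, and set $g^*_j:=\max_i g_i\le 2\delta n_j$. The key claim is that for $k>g^*_j$, the $k$-th largest rounded item in $C_j$ has value at most the $(k-g^*_j)$-th largest original item in $C_j$: shifting by at least the largest tuple size skips across a full ``rounded tuple'', so the shifted original sits in a strictly larger tuple, whose items have true value exceeding the rounded value. Taking an optimal packing of $\IB$, I replace each original class-$j$ item at rank $r\le n_j-g^*_j$ with the rounded class-$j$ item at rank $r+g^*_j$ (which is smaller and thus still fits); this packs all rounded items except the $g^*_j$ largest per class. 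Those residual items, each of size at most $2^{-j}$, pack greedily into at most $g^*_j/2^j+1\le 4\delta\,\SIZE(C_j)+1$ fresh bins, using $n_j\le 2^{j+1}\SIZE(C_j)$. Summing over the $\O(\log(1/\eps))$ classes,
\[
\OPT(\IR)\le \OPT(\IB)+4\delta\,\SIZE(\IB)+\O(\log(1/\eps))\le (1+\eps)\OPT(\IB)+\O(\log(1/\eps))
\]
for $c\ge 4$, giving (P3). Property (P4) follows from the per-tuple bound $\sum_i g_i(v_i-v_{i-1})\le g^*_j(v_{T_j}-v_0)\le 2\delta n_j\cdot 2^{-j-1}\le 2\delta\,\SIZE(C_j)$, which sums to $\O(\eps)\SIZE(\IB)$ across classes.

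\textbf{Main obstacle.} The central challenge is marrying GK's \emph{rank-based} approximation (error $\delta n$ in rank) to the \emph{size-based} bounds needed for (P3) and (P4). The geometric decomposition bridges them: within $C_j$, per-item rounding error is bounded by the class width $2^{-j}$, and multiplying by GK's rank slack $\delta$ yields a per-class error that sums cleanly across the $\O(\log(1/\eps))$ classes. A secondary subtlety is the variable tuple sizes that GK produces, which preclude the fixed-shift argument of Fernandez de la Vega--Lueker's linear grouping; my plan handles this by shifting each class by its own maximum tuple size $g^*_j$, which still guarantees that the shifted original position falls in a strictly higher tuple. The additive $\O(\log(1/\eps))$ slack in (P3) is inherent in this approach, as each class contributes one batch of fresh bins for its residual largest items.
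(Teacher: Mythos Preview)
Your high-level plan coincides with the paper's: geometric size classes $(2^{-j-1},2^{-j}]$, one Greenwald--Khanna summary per class with precision $\Theta(\eps)$, and a per-class shift argument for (P3); the paper then also derives (P4) from that same shift (bounding the total size of the removed top items) rather than from a per-tuple error sum. However, there is a real gap in how you use GK. The assertion that ``$g_i$ counts items whose true value lies in $(v_{i-1},v_i]$'' is \emph{not} a GK invariant. GK only guarantees that each item absorbed into tuple $i$ has value $\le v_i$ (merges always keep the larger value); a later insertion can place a new stored value strictly between the current $v_{i-1}$ and $v_i$, so the left neighbor of tuple $i$ changes while tuple $i$ still holds items below that new neighbor. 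Concretely: after many insertions, merge tuples for $2,\dots,21$ into a single tuple $(21,\,g{=}20)$ with left neighbor $1$, then insert $10.5$; tuple $(21,20)$ still contains the items $2,\dots,10$, all smaller than its new left neighbor $10.5$.

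This false premise is load-bearing in both your (P3) and (P4) arguments. For (P4), the bound $\sum_i g_i(v_i-v_{i-1})$ on $\SIZE(\IR)-\SIZE(\IB)$ presupposes that every absorbed item is $\ge v_{i-1}$. For (P3), your justification that ``the shifted original sits in a strictly larger tuple, whose items have true value exceeding the rounded value'' again asserts that items in a higher tuple exceed $v_i$, which they need not; using only $g_i\le 2\delta n_j$ and ignoring $\Delta_i$, you cannot lower-bound the number of originals $\ge v_i$, so a shift of merely $g^*_j$ is not justified. The fix is exactly what the paper does: work through GK's \emph{rank} interface, extract upper rank bounds $u_p$ for the stored values, shift each class by $\lfloor 4\delta N_j\rfloor$ (using both that consecutive $u_p$'s differ by at most $2\delta N_j$ and that $u_p-2\delta N_j$ is a valid lower rank bound), and then obtain (P4) by bounding the total size of the removed top items per class via $N_j\le 2^{j+1}\SIZE(\IB)$.
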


Before describing the rounding itself and proving Lemma~\ref{lem:rounding},
we explain how to use it to calculate an accurate estimate of the
number of bins.

\paragraph{Calculating a bound on the number of bins after rounding.}
First, we obtain a solution $\calS$ of the rounded instance $\IR$.
For instance, we may round the solution of the linear program introduced 
by Gilmore and Gomory~\cite{gilmore61_LP_cutting_stock,gilmore61_LP_cutting_stock2},
and get a solution with at most $\OPT(\IR) + \sigma$ bins.
Or, if item sizes are rational numbers, we may compute an optimal solution for $\IR$ by the algorithm
of Goemans and Rothvo{\ss}~\cite{goemans14_HMBP_polynomial}; however,
the former approach appears to be more efficient and more general.
In the following, we thus assume that $\calS$ uses at most $\OPT(\IR) + \sigma$ bins.

We now calculate a bound on the number of bins in the original instance.
Let $W$ be the total free space in the bins of $\calS$ that can be used for small items.
To be precise, $W$ equals the sum over all bins $B$ in $\calS$ of $\max(0, 1 - \eps - s(B))$.
Note that the capacity of bins is capped at $1 - \eps$, because
it may happen that all small items are of size $\eps$ while the packing
leaves space of just under $\eps$ in any bin.
Then we would not be able to pack small items into these bins.
Reducing the capacity by $\eps$ removes this issue. 
On the other hand, if a small item does not fit into a bin, then 
the remaining space in the bin is smaller than $\eps$.

Let $s$ be the total size of all small items in the input stream.
If $s \le W$, then all small items surely fit into the free space of bins in $\calS$
(and can be assigned there greedily by \textsc{First Fit}).
Consequently, we output that the number of
bins needed for the stream of items is at most $|\calS|$, i.e., the number
of bins in solution $\calS$ for $\IR$.
Otherwise, we need to place small items of total size at most $s' = s - W$
into new bins and it is easy to see that opening at most $\lceil s' / (1 - \eps)\rceil \le (1 + \O(\eps))\cdot s' + 1$ bins
for these small items suffices.
Hence, in the case $s > W$, 
we output that $|\calS| + \lceil s' / (1 - \eps)\rceil$ bins are sufficient to pack
all items in the stream.

We prove that the number of bins that we output
in either case is a good approximation of the optimal number of bins,
provided that $\calS$ is a good solution for $\IR$.

\begin{lemma}\label{lem:bp-estimate}
Let $I$ be given as a stream of items.
Suppose that $0<\eps \le \frac13$, that
the rounded instance $\IR$, created from $I$, satisfies properties \ref{p:dSizes}-\ref{p:boundOnSIZE},
and that the solution $\calS$ of $\IR$ uses at most $\OPT(\IR) + \sigma$ bins. 
Let $\ALG(I)$ be the number of bins that our algorithm outputs.
Then, it holds that
$\OPT(I)\le \ALG(I) \le (1 + 3\eps)\cdot \OPT(I) + \sigma + \O\left(\log \oneovereps\right)$.
\end{lemma}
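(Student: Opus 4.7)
The plan is to prove the two inequalities separately, with a case distinction on whether $s \le W$ or $s > W$. The lower bound $\OPT(I) \le \ALG(I)$ will follow by exhibiting a feasible packing of $I$ into $\ALG(I)$ bins; the upper bound will be proved by showing that, on average, each bin in the solution $\calS$ is almost fully occupied.

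For feasibility, property~\ref{p:roudingUp} guarantees that $\calS$ yields a valid packing of $\IB$ into $|\calS|$ bins. In the case $s \le W$, the small items can be placed greedily (First Fit) into the reserved free space of $\calS$'s bins; the relevant observation is that a small item of size at most $\eps$ fits into any bin whose current load is at most $1-\eps$, so greedy can only fail after having packed more than $W$ worth of small items in total, which $s \le W$ rules out. In the case $s > W$, the same greedy absorbs $W$ worth of small items into $\calS$, and the remainder $s' = s - W$ is placed into $\lceil s'/(1-\eps)\rceil$ fresh bins, each holding at least $1-\eps$ of small items.

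For the upper bound, Case~1 ($s \le W$) is immediate: $\ALG(I) = |\calS| \le \OPT(\IR) + \sigma$ by hypothesis, at most $(1+\eps)\,\OPT(\IB) + \sigma + \O(\log \oneovereps)$ by property~\ref{p:boundOnOPT}, and $\OPT(\IB) \le \OPT(I)$ closes the case. Case~2 ($s > W$) rests on the algebraic identity
\[
\SIZE(\IR) + W \;=\; \sum_{B \in \calS} \max(s(B),\, 1-\eps) \;\ge\; |\calS|\,(1-\eps),
\]
which bounds $|\calS|$ by $(\SIZE(\IR) + W)/(1-\eps)$. Adding $\lceil s'/(1-\eps)\rceil \le (s - W)/(1-\eps) + 1$, the $W$ terms cancel, and property~\ref{p:boundOnSIZE} then yields
\[
\ALG(I) \;\le\; \frac{\SIZE(\IR) + s}{1-\eps} + 1 \;\le\; \frac{(1+\eps)\,\SIZE(I)}{1-\eps} + 1.
\]
Since $\SIZE(I) \le \OPT(I)$ and $(1+\eps)/(1-\eps) \le 1 + 3\eps$ for $\eps \le \tfrac{1}{3}$, this gives $\ALG(I) \le (1+3\eps)\,\OPT(I) + 1$, comfortably inside the claimed bound.

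The main obstacle I anticipate is the bookkeeping in Case~2: a naive argument that bounds $|\calS|$ by $\OPT(\IR) + \sigma$ and then adds $\lceil s'/(1-\eps)\rceil$ separately leaves the rounding slack uncharged and fails to combine with the new-bin term. The identity $\SIZE(\IR) + W = \sum_B \max(s(B), 1-\eps)$ rescues the argument; it is also the reason why $W$ must be defined with the $1-\eps$ cap, since using a cap of $1$ would not guarantee that greedy First Fit on the small items actually reaches the declared free space. Property~\ref{p:boundOnSIZE}, which limits the rounding slack $\SIZE(\IR) - \SIZE(\IB)$ to a $(1+\eps)$ factor, is precisely what allows the final step to absorb the loss into a $(1 + \O(\eps))$ multiplicative approximation.
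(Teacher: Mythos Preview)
Your proposal is correct and follows essentially the same approach as the paper: a case split on $s \le W$ versus $s > W$, with the lower bound via feasibility of the constructed packing (using \ref{p:roudingUp}), the Case~1 upper bound via \ref{p:boundOnOPT}, and the Case~2 upper bound via a volume argument combined with \ref{p:boundOnSIZE}. The only cosmetic difference is in Case~2: the paper argues constructively that after filling $\calS$ with small items and opening the extra bins, every bin but the last has unused space below $\eps$, hence $(\ALG(I)-1)(1-\eps) < \SIZE(\IR)+s$; your algebraic identity $\SIZE(\IR)+W=\sum_{B}\max(s(B),1-\eps)\ge |\calS|(1-\eps)$ yields the same inequality $\ALG(I)\le (\SIZE(\IR)+s)/(1-\eps)+1$ more directly, and both finish identically via $\SIZE(\IR)+s\le (1+\eps)\SIZE(I)$ and $(1+\eps)/(1-\eps)\le 1+3\eps$.
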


\begin{proof}
We analyze the two cases of the algorithm:

\mycase{$s\le W$} In this case, small items fit into the bins of $\calS$
and $\ALG(I) = |\calS|$.
For the inequality $\OPT(I)\le \ALG(I)$, observe that the packing $\calS$
can be used as a packing of items in $\IB$ (in a straightforward way) with no less
free space for small items by property~\ref{p:roudingUp}. Thus $\OPT(I) \le |\calS|$.

To upper bound $\ALG(I)$, note that
$$|\calS| \le \OPT(\IR) + \sigma
\le (1 + \eps)\cdot \OPT(\IB) + \O\left(\log \oneovereps\right) + \sigma
\le (1 + \eps)\cdot \OPT(I) + \O\left(\log \oneovereps\right) + \sigma\,,$$
where the second inequality follows from property~\ref{p:boundOnOPT}
and the third inequality holds as $\IB$ is a subinstance of $I$.

\mycase{$s> W$} Recall that $\ALG(I) = |\calS| + \lceil s' / (1 - \eps)\rceil$.
We again have that $\calS$ can be used as a packing of $\IB$ with no less
free space for small items. Thus, the total size of small items that do not fit
into bins in $\calS$ is at most $s'$ and these items clearly fit into $\lceil s' / (1 - \eps)\rceil$ bins.
Hence, $\OPT(I)\le |\calS| + \lceil s' / (1 - \eps)\rceil$.

For the other inequality, consider starting with solution $\calS$ for $\IR$,
first to (almost) fill up the bins of $\calS$ with small items of total size $W$,
then using $\lceil s' / (1 - \eps)\rceil$ additional bins for the
remaining small items.
Note that in each bin, except the last one, the unused space is less than $\eps$,
thus the total size of items in $\IR$ and small items is more than $(\ALG(I) - 1) \cdot (1 - \eps)$.
Finally, we replace items in $\IR$ by items in $\IB$ and the total size of items
decreases by $\SIZE(\IR) - \SIZE(\IB)\le \eps\cdot \SIZE(\IB)\le \eps\cdot \SIZE(I)$ by property~\ref{p:boundOnSIZE}.
Hence, $\SIZE(I)\ge (\ALG(I) - 1) \cdot (1 - \eps) - \eps\cdot \SIZE(I)$.
Rearranging and using $\eps \le \frac13$, we get
$$\ALG(I)\le \frac{1 + \eps}{1 - \eps}\cdot \SIZE(I) + 1 \le (1 + 3\eps)\cdot \OPT(I) + 1\,.$$

Considered together, these two cases both meet the claimed bound. 
\end{proof}

\subsection{Processing the Stream and Rounding}
\label{sec:rounding}

The streaming algorithm of the rounding stage makes use of the deterministic quantile summary of
Greenwald and Khanna~\cite{greenwald01_quantile_summaries}. 
Given a precision $\delta > 0$ and an input stream of numbers $s_1, \dots, s_N$,
their algorithm computes a data structure $Q(\delta)$ which is able to answer
a quantile query with precision $\delta N$.
Namely, for any $0\le \phi\le 1$, it returns an element $s$ of the input stream
such that the rank of $s$ is $[(\phi-\delta)N, (\phi+\delta)N]$,
where the rank of $s$ is the position of $s$ in the non-increasing ordering of the input stream.%
\footnote{Note that if $s$ appears more times in the stream, its rank is an interval rather
than a single number. Also, unlike in~\cite{greenwald01_quantile_summaries}, we order numbers non-increasingly,
which is more convenient for \textsc{Bin Packing}.}
The data structure stores an ordered sequence of tuples, each consisting of an input 
number $s_i$ and valid lower and upper bounds on the true rank of $s_i$
in the input sequence.\footnote{More precisely, valid lower and upper bounds
on the rank of $s_i$ can be computed easily from the set of tuples.}
The first and last stored items correspond to
the maximum and minimum numbers in the stream, respectively.
Note that the lower and upper bounds on the rank of any stored number differ by at most $\lfloor 2\delta N\rfloor$
and upper (or lower) bounds on the rank of two consecutive stored numbers differ by at most $\lfloor 2\delta N\rfloor$ as well.
The space requirement of $Q(\delta)$ is $\O(\frac{1}{\delta}\cdot \log \delta N)$, however,
in practice the space used is observed to scale linearly with $\frac{1}{\delta}$~\cite{luo16_quantiles_experimental}.
(Note that an offline optimal data structure for $\delta$-approximate quantiles uses space $\O\left(\frac{1}{\delta}\right)$.)
We use data structure $Q(\delta)$ to construct our algorithm for processing the stream $\IB$ of big items.

\subsubsection{Simple Rounding Algorithm}
We begin by describing a simpler solution with $\delta = \frac14\eps^2$, resulting
in a rounded instance with $\Otilde(\oneoverepssquared)$ item sizes.
Subsequently, we introduce a more involved solution with smaller space
cost.
The algorithm uses a quantile summary structure to determine the
rounding scheme. 
Given a (big) item $s_i$ from the input, we insert it into $Q(\delta)$.
After processing all items, we extract from $Q(\delta)$ the set of stored
input items (i.e., their sizes) together with upper bounds on their
rank (where the largest size has highest rank 1, and the smallest size
has least rank $N$). 
Note that the number $\NB$ of big items in $\IB$ is less than
$\oneovereps\SIZE(\IB)\le \oneovereps\OPT(\IB)$ as each is of size more than $\eps$.
Let $q$ be the number of items (or tuples) extracted from $Q(\delta)$;
we get that $q = \O(\frac{1}{\delta}\cdot \log \delta \NB) = \O\big(\oneoverepssquared\cdot \log (\eps\cdot \OPT(\IB))\big)$.
Let $(a_1, u_1 = 1), (a_2, u_2), \dots, (a_q, u_q = \NB)$ be the output pairs of
an item size and the bound on its rank, sorted so that $a_1\ge a_2\ge \cdots \ge a_q$.
We define the rounded instance $\IR$
with at most $q$ item sizes as follows:
$\IR$ contains $(u_{j+1} - u_j)$ items of size $a_j$ for each $j = 1, \dots, q-1$,
plus one item of size $a_q$.
See Figure~\ref{fig:bp-rounding} for an illustration.

\begin{figure}
\centerline{\includegraphics[scale=1]{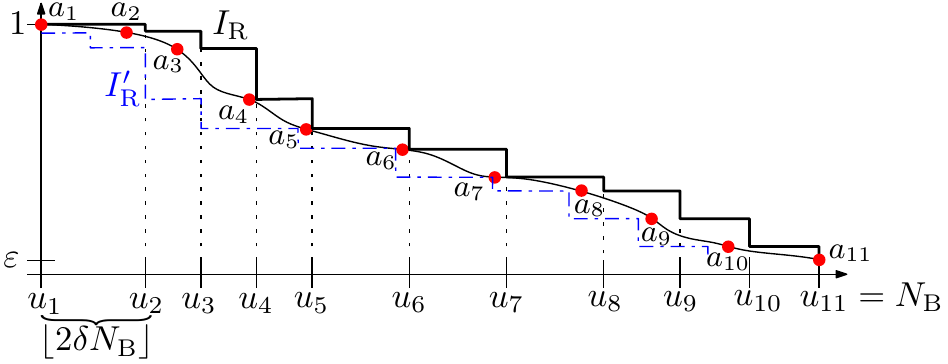}}
\caption{
An illustration of the original distribution of sizes of big items in $\IB$, depicted by a smooth curve,
and the distribution of item sizes in the rounded instance $\IR$, depicted by a bold ``staircase'' function.
The distribution of $\IR'$ (which is $\IR$ without the $\lfloor 4\delta \NB\rfloor$ biggest items)
is depicted a (blue) dash dotted line.
Selected items $a_i, \dots, a_q$, with $q=11$, are illustrated by (red) dots,
and the upper bounds $u_1, \dots, u_q$ on the ranks appear on the $x$ axis.
}
\label{fig:bp-rounding}
\end{figure}

We show that the desired properties \ref{p:dSizes}-\ref{p:boundOnSIZE} hold with $\sigma = q$.
Property \ref{p:dSizes} follows easily from the definition of $\IR$
and the design of data structure $Q(\delta)$.
Note that the number of items is preserved.
To show \ref{p:roudingUp}, suppose for a contradiction that the $i$-th biggest item in $\IB$
is bigger than the $i$-th biggest item in $\IR$, whose size is $a_j$ for $j=1,\dots,q-1$, i.e.,
$i\in [u_j, u_{j+1})$ (note that $j < q$ as $a_q$ is the smallest item in $\IB$ and is present only once in $\IR$).
We get that the rank of item $a_j$ in $\IB$ is strictly more than $i$,
and as $i \ge u_j$, we get a contradiction with the fact that $u_j$ is a valid upper bound
on the rank of $a_j$ in $\IB$.

Next, we give bounds for $\OPT(\IR)$ and $\SIZE(\IR)$, which are required by properties~\ref{p:boundOnOPT}
and~\ref{p:boundOnSIZE}.
We pack the $\lfloor 4\delta \NB\rfloor$ biggest items in $\IR$ separately into ``extra'' bins.
Using the choice of $\delta = \frac14\eps^2$ and $\NB\le \oneovereps\SIZE(\IB)$,
we bound the number of these items and thus extra bins by
$4\delta \NB\le \eps\cdot \SIZE(\IB)\le \eps\cdot \OPT(\IB)$.
Let $\IR'$ be the remaining items in $\IR$.
We claim that that the $i$-th biggest item $b_i$ in $\IB$ is bigger than
the $i$-th biggest item in $\IR'$ with size equal to $a_j$ for $j=1,\dots,q$.
For a contradiction, suppose that $b_i < a_j$,
which implies that the rank $r_j$ of $a_j$ in $\IB$ is less than $i$.
Note that $j < q$ as $a_q$ is the smallest item in $\IB$.
Since we packed the $\lfloor 4\delta \NB\rfloor$ biggest items from $\IR$ separately, 
one of the positions of $a_j$ in the ordering of $\IR$ is
$i + \lfloor 4\delta \NB\rfloor$ and so we have $i + \lfloor 4\delta \NB\rfloor < u_{j+1} \le u_j + \lfloor 2\delta \NB\rfloor$,
where the first inequality holds by the construction of $\IR$
and the second inequality is by the design of data structure $Q(\delta)$.
It follows that $i < u_j - \lfloor 2\delta \NB\rfloor$.
Combining this with $r_j < i$, we obtain that the rank of $a_j$ in $\IB$ is less than $u_j - \lfloor 2\delta \NB\rfloor$,
which contradicts that $u_j - \lfloor 2\delta \NB\rfloor$ is a valid
lower bound on the rank of $a_j$.

The claim implies $\OPT(\IR') \le \OPT(\IB)$ and $\SIZE(\IR') \le \SIZE(\IB)$.
We thus get that $\OPT(\IR) \le \OPT(\IR') + \lfloor 4\delta \NB\rfloor \le \OPT(\IB) + \eps\cdot \OPT(\IB)$,
proving property~\ref{p:boundOnOPT}.
Similarly, $\SIZE(\IR) \le \SIZE(\IR') + \lfloor 4\delta \NB\rfloor \le \SIZE(\IB) + \eps\cdot \SIZE(\IB)$,
showing \ref{p:boundOnSIZE} and concluding the analysis of our simple rounding algorithm.

\subsubsection{Improved Rounding Algorithm}
\label{sec:bp-betterRounding}

Our improved rounding algorithm reduces the number of sizes in the rounded instance
(and also the memory requirement) from $\Otilde(\oneoverepssquared)$ to $\Otilde(\oneovereps)$.
It is based on the observation that the number of items of sizes close
to $\eps$ can be approximated with much lower accuracy than the
number of items with sizes close to 1, without affecting the quality
of the overall approximation. This was observed already by Karmarkar and Karp~\cite{karmarkar82}.

\begin{proof}[Proof of Lemma~\ref{lem:rounding}]
Let $k = \lceil \log_2 \oneovereps \rceil$. 
We first group big items in $k$ groups $0, \dots, k-1$ by size such that in group $j$
there are items with sizes in $(2^{-j-1}, 2^{-j}]$.
That is, the size intervals for groups are $(0.5, 1], (0.25, 0.5]$, etc.
Let $N_j$, $j = 0, \dots, k-1$, be the number of big items in group $j$;
clearly, $N_j < 2^{j+1} \SIZE(\IB)\le 2^{j+1} \OPT(\IB)$.
Note that the total size of items in group $j$ is in $(2^{-j-1}\cdot N_j, 2^{-j}\cdot N_j]$.
Summing over all groups, we get in particular that
\begin{equation}
\SIZE(\IB) > \sum_{j=0}^k \frac{N_j}{2^{j+1}}\,.
\label{eqn:boundOnSIZE}
\end{equation}

For each group $j$, we use a separate data structure $Q_j := Q(\delta)$ with $\delta = \frac18\eps$,
where $Q(\delta)$ is the quantile summary from~\cite{greenwald01_quantile_summaries} with precision $\delta$.
So when a big item of size $s_i$ arrives,
we find $j$ such that $s_i\in (2^{-j-1}, 2^{-j}]$ and insert $s_i$ into $Q_j$. 
After processing all items, for each group $j$, we do the following:
We extract from $Q_j$ the set of stored
input items (i.e., their sizes) together with upper bounds on their rank.
Let $(a^j_1, u^j_1 = 1), (a^j_2, u^j_2), \dots, (a^j_{q_j}, u^j_{q_j} = N_j)$
be the pairs of an item size and the upper bound
on its rank in group $j$, ordered as in the simpler algorithm so that $a^j_1\ge a^j_2\ge \cdots \ge a^j_{q_j}$.
We have
$$q_j = \O\left(\frac{1}{\delta}\cdot \log \delta N_j\right)
      = \O\left(\frac{1}{\eps}\cdot \log \left(\eps \cdot 2^{j+1} \OPT(\IB)\right)\right)
      = \O\left(\frac{1}{\eps}\cdot \log \OPT(\IB)\right)
      ,$$
since $\eps 2^j \le \eps 2^k \le 1$. 
      
An auxiliary instance $\IR^j$ is formed by $(u^j_{i+1} - u^j_i)$ items of size $a_i$ for $i=1, \dots, q_j - 1$
plus one item of size $a_{q_j}$.
To create the rounded instance $\IR$, we take the union of all auxiliary instances $\IR^j$, $j = 0, \dots, k-1$.
Note that the number of item sizes in $\IR$ is
$$\sigma\le \sum_{j=0}^{k-1} q_j = \sum_{j=0}^{k-1} \O\left(\frac{1}{\eps}\cdot \log \OPT(\IB)\right)
= \O\left(\frac{k}{\eps}\cdot  \log \OPT(\IB)\right) 
= \O\left(\oneovereps\cdot \log \oneovereps\cdot  \log \OPT(\IB)\right) .$$

We show that the desired properties \ref{p:dSizes}-\ref{p:boundOnSIZE} are satisfied.
Property \ref{p:dSizes} follows easily from the definition of $\IR$ as the union of instances $\IR^j$
and the design of data structures $Q_j$.
To see property \ref{p:roudingUp}, for every group $j$, it holds that the $i$-th biggest item in group $j$ in $\IR$
is at least as large as the $i$-th biggest item in group $j$ in $\IB$.
Indeed, for any $p=0,\dots,q_j$,
$u^j_p$ is a valid upper bound on the rank of $a^j_p$ in group $j$ in $\IB$
and ranks of items of size $a^j_p$ in group $j$ in $\IR$ are at least $u^j_p$.
Moreover, the number of items is preserved in every group.
Hence, overall, the $i$-th biggest item in $\IR$
cannot be smaller than the $i$-th biggest item in $\IB$.

Next, we prove properties \ref{p:boundOnOPT} and \ref{p:boundOnSIZE},
i.e., the bounds on $\OPT(\IR)$ and on $\SIZE(\IR)$.
For each group $j$, we pack the $\lfloor 4\delta N_j\rfloor$ biggest items in $\IR$
with size in group $j$ into ``extra'' bins, each containing $2^j$ items, 
except for at most one extra bin which may contain fewer than $2^j$ items.
This is possible as any item in group $j$ has size at most $2^{-j}$.
Using the choice of $\delta = \frac18\eps$ and~\eqref{eqn:boundOnSIZE},
we bound the total number of extra bins by
\begin{equation}
\sum_{j=0}^k \left\lceil\frac{4\delta N_j}{2^j}\right\rceil
\le 4\cdot \frac18\eps \cdot  \sum_{j=0}^k \frac{N_j}{2^j} + k
\le \half\eps\cdot 2\cdot \SIZE(\IB) + k
\le \eps\cdot \OPT(\IB) + k\,.
\label{eqn:boundOnExtraBins}
\end{equation}

Let $\IR'$ be the remaining items in $\IR$.
Consider group $j$ and let $\IB(j)$ and $\IR'(j)$ be the items with sizes in 
$(2^{-j-1}, 2^{-j}]$ in $\IB$ and in $\IR'$, respectively.
We claim that the $i$-th biggest item $b_i$ in $\IB(j)$ is at least as large
as the $i$-th biggest item in $\IR'(j)$  with size equal to $a_p$ for $p=1,\dots,q_j$.
For a contradiction, suppose that $b_i < a_p$,
which implies that the rank $r_p$ of $a_p$ in $\IB(j)$ is less than $i$.
Note that $p < q_j$ as $a_{q_j}$ is the smallest item in $\IB(j)$.
Since we packed the largest $\lfloor 4\delta N_j\rfloor$ items from 
$\IR(j)$ separately, we have $i + \lfloor 4\delta N_j\rfloor < u_{p+1} \le u_p + \lfloor 2\delta N_j\rfloor$,
where the last inequality is by the design of data structure $Q_j$.
It follows that $i < u_p - \lfloor 2\delta N_j\rfloor$. Combining
it with $r_p < i$, we obtain that the rank of $a_p$ in $\IB(j)$ is less than $u_p - \lfloor 2\delta N_j\rfloor$,
which contradicts that $u_p - \lfloor 2\delta N_j\rfloor$ is a valid lower bound on the rank 
of $a_p$.
Hence, the claim holds for any group
and it immediately implies $\OPT(\IR') \le \OPT(\IB)$ and $\SIZE(\IR') \le \SIZE(\IB)$.

Combining with \eqref{eqn:boundOnExtraBins}, we get that
$\OPT(\IR) \le \OPT(\IR') + \eps\cdot \OPT(\IB) + k
	\le (1 + \eps)\cdot \OPT(\IB) + k,$
thus \ref{p:boundOnOPT} holds.
Similarly, to bound the total wasted space,
observe that the total size of items of $\IR$ that are not in $\IR'$
is bounded by
$$\sum_{j = 0}^k \frac{4\delta N_j}{2^j} \le 4\cdot \frac18\eps\cdot 2\cdot \sum_{j = 0}^k \frac{N_j}{2^{j+1}}
\le \eps\cdot \SIZE(\IB)\,,$$
where we use \eqref{eqn:boundOnSIZE} in the last inequality.
We obtain that
$\SIZE(\IR) \le \SIZE(\IR') + \eps\cdot \SIZE(\IB)
\le (1 + \eps)\cdot \SIZE(\IB).$
We conclude that properties \ref{p:dSizes}-\ref{p:boundOnSIZE} hold for the
rounded instance $\IR$.
\end{proof}

\subsection{Bin Packing and Quantile Summaries}\label{sec:BPandQS}

In the previous section, the deterministic quantile summary data structure from~\cite{greenwald01_quantile_summaries}
allows us to obtain a streaming approximation scheme for \textsc{Bin Packing}.
We argue that this connection runs deeper.

We start with different scenarios for which there exist better quantile summaries.
First, if all big item sizes belong to a universe $U\subset (\eps,1]$, then it can be better to use the quantile summary of Shrivastava~\etal~\cite{shrivastava04_bounded_universe_qs},
which provides a guarantee of $\O(\frac1\delta\cdot \log |U|)$ on the space complexity,
where $\delta$ is the precision requirement.
Thus, by using $k$ copies of this quantile summary in the same way as in Section~\ref{sec:bp-betterRounding},
we get a streaming $1+\eps$-approximation algorithm for \textsc{Bin Packing}
that runs in space $\O(\oneovereps\cdot \log \oneovereps \cdot \log |U|)$.

Second, if we allow the algorithm to use randomization and fail with probability $\gamma$,
we can employ the optimal randomized quantile summary of Karnin, Lang, and Liberty~\cite{karnin16_optimal_rand_quantile_summaries},
which, for a given precision $\delta$ and failure probability $\eta$, uses
space $\O(\frac1\delta\cdot \log\log \frac1\eta)$ and does not provide a $\delta$-approximate
quantile for some quantile query with probability at most $\eta$. 
In particular, using $k$ copies of their data structure with precision $\delta = \Theta(\eps)$
and failure probability $\eta = \gamma / k$ in the same way as in Section~\ref{sec:bp-betterRounding}
gives a streaming $1+\eps$-approximation algorithm for \textsc{Bin Packing}
which fails with probability at most $\gamma$
and runs in space $\O\left(\oneovereps\cdot \log \oneovereps \cdot \log\log (\log \oneovereps / \gamma)\right)$.

More intriguingly, the connection between quantile summaries and \textsc{Bin Packing}
also goes in the other direction.
Namely, we show that a streaming $1+\eps$-approximation algorithm
for \textsc{Bin Packing} with space bounded by $S(\eps,\OPT)$ (or $S(\eps, N)$)
implies a data structure of size $S(\eps,N)$ for the following \textsc{Estimating Rank} problem:
Create a summary of a stream of $N$ numbers which is able to
provide a $\delta$-approximate rank of any query $q$, i.e.,
the number of items in the stream which are larger than $q$, up to an additive error of $\pm \delta N$.
Observe that a summary for \textsc{Estimating Rank} is essentially a
quantile summary and we can actually use it to find an approximate
quantile by doing a binary search over possible item names.
However, this approach does \emph{not} guarantee that the item name returned will
correspond to one of the items present in the stream.

The reduction from \textsc{Estimating Rank} to \textsc{Bin Packing} goes as follows:
Suppose that all numbers in the input stream for \textsc{Estimating
  Rank} are from interval $(\frac12, \frac23)$
(this is without loss of generality by scaling) and let $q$ be a query in $(\frac12, \frac23)$.
For each such $a_i$ (in the \textsc{Estimating Rank} instance),
we introduce two items of size $a_i$ (in the \textsc{Bin Packing} instance). 
In the stream for \textsc{Bin Packing}, after the $2N$ items (two
copies each of $a_1, \dots, a_N$)
are inserted in the same order as in the stream for \textsc{Estimating Rank}, we then insert a further
$2N$ items, all of size $1-q$.
Observe first that no pair of the first $2N$ items can be placed in the
same bin, so we must open at least $2N$ bins, two for each of $a_1,
\ldots, a_N$.
Since $\frac12 > (1-q) > \frac13$, and $a_i > \frac12$, we can place at most
one of the $2N$ items of size $(1-q)$ in a bin with $a_i$ in it,
provided $a_i + (1-q) \leq 1$, i.e. $a_i \leq q$.
Thus, we can pack a number of the $(1-q)$-sized items, equivalent to
$2(N-\operatorname{rank}(q))$, in the first $2N$ bins.
This leaves $2\operatorname{rank}(q)$ items, all of size $(1-q)$.
We pack these optimally into $\operatorname{rank}(q)$ additional
bins, for a total of $2N + \operatorname{rank}(q)$ bins. 

We claim that a $1+\eps$-approximation of the optimum number of bins provides a 
$4\eps$-approximate rank of $q$.
Indeed, let $m$ be the number of bins returned by the algorithm
and let $r = m - 2N$ be the estimate of $\operatorname{rank}(q)$.
We have that the optimal number of bins equals $2N + \operatorname{rank}(q)$
and thus $2N + \operatorname{rank}(q) \le m \le (1+\eps)\cdot (2N + \operatorname{rank}(q)) + o(N)$.
By using $r = m - 2N$ and rearranging, we get
$$\operatorname{rank}(q) \le r \le \operatorname{rank}(q) + \eps \operatorname{rank}(q) + 2\eps N + o(N)\,.$$
Since the right-hand side can be upper bounded by $\operatorname{rank}(q) + 4\eps N$
(provided that $o(N) < \eps N$),
$r$ is a $4\eps$-approximate rank of $q$.
Hence, the memory state of an algorithm for \textsc{Bin Packing} after processing the first $2N$
items (of sizes $a_1, \dots, a_N$) can be used as a data structure for \textsc{Estimating Rank}.

In~\cite{cormode19_qs_lower_bound} we show a space lower bound of $\Omega(\oneovereps\cdot \log \eps N)$ for comparison-based 
data structures for \textsc{Estimating Rank} (and for quantile summaries as well).

\begin{theorem}[Theorem 13 in~\cite{cormode19_qs_lower_bound}]\label{thm:estimatingRank}
For any $0 < \eps < \frac1{16}$, 
there is no deterministic comparison-based data structure for \textsc{Estimating Rank}
which stores $o\left(\oneovereps\cdot \log \eps N\right)$ items on any input stream of length $N$.
\end{theorem}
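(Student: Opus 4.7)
The plan is an adversarial argument tailored to the comparison model. After processing any stream, a deterministic comparison-based data structure's state is completely described by the subset $T$ of items it has chosen to store, together with the partial order on $T$ revealed by the sequence of comparisons made during processing. For each $t\in T$, the algorithm knows only a valid interval $[L(t),U(t)]$ for the true rank of $t$ in the stream. To answer every rank query within additive error $\eps N$, the two stored neighbours in $T$ that bracket the query point must have rank intervals whose union spans at most $2\eps N$ positions. Hence the task reduces to exhibiting a family $\mathcal{F}$ of streams such that no single state storing $s=o(\eps^{-1}\log \eps N)$ items can satisfy this property simultaneously for all members of $\mathcal{F}$.

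I would construct $\mathcal{F}$ hierarchically in $k=\Theta(\log \eps N)$ levels. At level~$0$, place $\Theta(\eps^{-1})$ fixed ``spine'' items at evenly-spaced rank positions that appear in every stream of $\mathcal{F}$. Between consecutive spines one opens a ``slot'' that will carry a level-$1$ construction at a constant-factor finer scale, and so on down to level $k-1$. Choosing the branching factor so that the total length matches $N$ yields $k=\Theta(\log \eps N)$. The adversary's degrees of freedom are $\Theta(\eps^{-1})$ independent binary choices inside each slot at each level, inserted in an interleaved order so that the algorithm commits to its stored subset without yet knowing which choices will be made later. Two streams in $\mathcal{F}$ that differ in the bit at level~$i$ within some slot admit a rank query whose true rank differs by more than $2\eps N$, provided the per-level scales are calibrated so that flipping a single level-$i$ bit shifts ranks by more than $2\eps N$ inside its local window.

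The conclusion follows by arguing that each of the $\Theta(k/\eps)$ bits must be separately recoverable from the algorithm's final state. For any level-$i$ bit in any slot, designate a rank query placed immediately after the associated bit-item. If the algorithm's state does not store at least one witness item whose comparisons pin down that specific bit, then two streams differing only in that single bit leave the state identical yet force answers differing by more than $\eps N$, contradicting the accuracy requirement. Summing over the $\Theta(k/\eps)$ independent bits gives $|T|=\Omega(\eps^{-1}\log \eps N)$.

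The main obstacle, as I see it, is the \emph{disentanglement} of levels: a priori, a single stored item could serve as a witness for several different bits across several levels, letting the algorithm amortise storage and defeating the per-bit charging argument. I would handle this by arranging the construction so that the rank sub-intervals associated with distinct bits are pairwise disjoint, forcing any useful witness to lie in exactly one bit's sub-interval. A secondary subtlety is that the algorithm chooses which items to store adaptively based on observed comparisons; this is handled by running the adversary as an online simulation that reveals comparison outcomes consistent with many continuations of $\mathcal{F}$, and only commits to the actual choices after the algorithm has finalised its state. With these ingredients in place, the per-bit witness count composes to the claimed $\Omega(\eps^{-1}\log \eps N)$ lower bound.
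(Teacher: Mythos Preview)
The paper does not prove this theorem at all: it is stated as ``Theorem~13 in~\cite{cormode19_qs_lower_bound}'' and imported wholesale from that external reference. There is therefore no proof in the present paper to compare your proposal against.

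That said, your outline has the right overall shape for this class of lower bound --- an adversarial, multi-round construction against a comparison-based algorithm, with a per-bit charging argument --- and is broadly in the spirit of how such bounds are established. One point in your sketch would need care if you were to carry it out fully: you assert that at \emph{every} level~$i$, flipping a single bit shifts the true rank of the designated query by more than $2\eps N$. But in a genuinely hierarchical construction where each slot at level~$i$ contains a scaled-down copy at level~$i{+}1$, the total mass inside a deep slot is much smaller than~$\eps N$, so a bit local to that slot cannot by itself move any global rank by~$\eps N$. The standard way around this (and, as far as I know, what the cited paper does) is not a static tree of nested slots but a \emph{sequential} adversary over $\Theta(\log \eps N)$ rounds: each round inserts a fresh batch of $\Theta(N/\log \eps N)$ items concentrated in a gap between two items currently stored by the algorithm, forcing $\Omega(1/\eps)$ new items to be retained per round and rendering previously stored items useless for queries inside the gap. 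That framing automatically handles both the scale issue and your ``disentanglement'' obstacle, since witnesses from different rounds are separated by the round structure itself rather than by a disjointness property you would otherwise have to engineer.
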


We conclude that there is no comparison-based streaming algorithm for \textsc{Bin Packing}
which stores $o(\oneovereps\cdot \log \OPT)$ items on any input stream
(recall that $N = \O(\OPT)$ in our reduction).
Note that our algorithm is comparison-based if we employ the
comparison-based quantile summary of Greenwald and
Khanna~\cite{greenwald01_quantile_summaries},
except that it needs to determine the size group for each item, which
can be done by comparisons with $2^{-j}$ for integer values of $j$.
Nevertheless, comparisons with a fixed set of constants does not affect
the reduction from \textsc{Estimating Rank} (i.e., the reduction can choose an interval
to avoid all constants fixed in the algorithm), thus the lower bound of $\Omega\left(\oneovereps\cdot \log \eps N\right)$
applies to our algorithm as well.
This yields near optimality of our approach, up to a factor of $\O\left(\log \oneovereps\right)$.
Finally, we remark that the lower bound of $\Omega(\oneovereps\cdot \log \log \frac{1}{\delta})$
for randomized comparison-based quantiles summaries~\cite{karnin16_optimal_rand_quantile_summaries}
translates to \textsc{Bin Packing} as well.

\subsection{Vector Bin Packing}\label{app:vbp}

As already observed by Fernandez de la Vega and Lueker~\cite{fernandez81},
a $1+\eps$-approximation algorithm for (scalar) \textsc{Bin Packing} implies a
$d\cdot (1+\eps)$-approximation algorithm for \textsc{Vector Bin Packing}, where
items are $d$-dimensional vectors and bins have capacity $d$ in every dimension.
Indeed, we split the vectors into $d$ groups according to the largest dimension
(chosen arbitrarily among dimensions that have the largest value) and in each group 
we apply the approximation scheme for \textsc{Bin Packing}, packing just according
to the largest dimension. Finally, we take the union of opened bins over all groups.
Since the optimum of the \textsc{Bin Packing} instance
for each group is a lower bound on the optimum of \textsc{Vector Bin Packing},
we get that that the solution is a $d\cdot (1+\eps)$-approximation.

This can be done in the same way in the streaming model. Hence 
there is a streaming algorithm for \textsc{Vector Bin Packing} which outputs
a $d\cdot (1+\eps)$-approximation of $\OPT$, the offline optimal number of bins, using 
$\O\left(\frac{d}{\eps} \cdot \log \frac{1}{\eps} \cdot \log \OPT\right)$ memory.
By scaling $\eps$, there is a $d+\eps$-approximation algorithm
with $\Otilde(\frac{d^2}{\eps})$ memory.
We can, however, do better by one factor of $d$.

\begin{theorem}
There is a streaming $d+\eps$-approximation for \textsc{Vector Bin Packing} algorithm that uses 
$\O\left(\frac{d}{\eps} \cdot \log \frac{d}{\eps} \cdot \log \OPT\right)$ memory.
\end{theorem}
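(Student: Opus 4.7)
The plan is to apply the reduction sketched immediately before the theorem: partition the input stream into $d$ sub-streams by the index of the maximum coordinate (ties broken arbitrarily), so that every vector $\mathbf v$ in group $g$ satisfies $v_g=\|\mathbf v\|_\infty$, and feed the max-coordinate values in each group into a scalar \textsc{Bin Packing} streaming algorithm. Because the coordinate-$g$ load in any bin of a feasible VBP packing is at most $1$, restricting such a packing to a single group gives a feasible scalar packing, hence $\OPT_g\le\OPT$ for every $g$ and $\sum_g\OPT_g\le d\cdot\OPT$.

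To match a $d+\eps$ ratio I would set the per-group precision to $\eps':=\Theta(\eps/d)$. Applying Lemma~\ref{lem:rounding} and Lemma~\ref{lem:bp-estimate} inside each group then yields at most $(1+O(\eps'))\OPT_g+O(\sigma)$ bins, summing across the $d$ groups to $(d+O(\eps))\OPT+O(d\sigma)$. Done naively this costs $\sigma=\widetilde O(d/\eps)$ per group and $\widetilde O(d^2/\eps)$ in total, which is exactly the bound the paragraph before the theorem identifies as the one to beat by a factor of $d$.

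The key step to save this remaining factor of $d$ is to \emph{share} quantile-summary resources across the $d$ groups. Concretely, for each of the $k=O(\log(d/\eps))$ size buckets $(2^{-j-1},2^{-j}]$ used in the improved rounding of Section~\ref{sec:bp-betterRounding} I would maintain a single Greenwald--Khanna summary of precision $\delta=\Theta(\eps/d)$ into which every big vector is inserted together with its group label, and derive the per-group rounded instances by filtering on these labels. The crucial observation that makes this work is that the ``extra bins'' absorbing badly-rounded items can be shared across groups: any such item has $\|\mathbf v\|_\infty\le 2^{-j}$, so $2^j$ of them fit in a single unit bin regardless of which dimension realises the max. Hence the total number of extras in bucket $j$ is $O(\delta N_j/2^j)$ rather than $d$ times this, and summing over buckets yields $O(\delta\cdot\sum_v\|\mathbf v\|_\infty)=O(\delta d\cdot\OPT)=O(\eps\cdot\OPT)$ extras, while the total size of all shared summaries is $k\cdot\widetilde O(d/\eps)=\widetilde O(d/\eps)$, matching the claimed bound.

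The main obstacle I anticipate is formalising the per-group rounding from a summary whose native guarantee is global: the Greenwald--Khanna bound $|\hat r(x)-r(x)|\le \delta N_j$ is on global ranks, not on per-group ranks, so the scalar argument of Lemma~\ref{lem:rounding} cannot be transplanted verbatim group by group. I expect to handle this either by augmenting each summary tuple with $d$ group-wise counters (an overhead already subsumed by the $\log(d/\eps)$ factor in the claimed space), which lets per-group ranks be read off directly from the shared structure, or by a direct global argument that removes the top $4\delta N_j$ items per bucket as shared extras and extends the domination proof of Lemma~\ref{lem:rounding} to show that, for each group, the restricted remainder needs at most $\OPT_g+O(\sigma)$ bins.
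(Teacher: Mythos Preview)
Your proposal is more complicated than the paper's argument and leaves the key step unresolved, whereas the paper sidesteps the obstacle entirely.

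The paper does \emph{not} split the stream into $d$ per-group scalar instances. Instead it forms a \emph{single} scalar \textsc{Bin Packing} instance $I'$ by replacing every vector $\mathbf v$ by the one number $\|\mathbf v\|_\infty$, and runs the scalar streaming algorithm on $I'$ with precision $\delta=\eps/d$. The space is then immediately $\O\!\left(\frac{1}{\delta}\log\frac{1}{\delta}\log\OPT\right)=\O\!\left(\frac{d}{\eps}\log\frac{d}{\eps}\log\OPT\right)$. Two short observations finish the proof: (a) any packing of $I'$ into $B$ unit bins yields a valid packing of $I$ into $B$ unit vector bins, since every coordinate of $\mathbf v$ is at most $\|\mathbf v\|_\infty$; and (b) $\OPT(I')\le d\cdot\OPT(I)$, because each bin of an optimal VBP solution can be split by max-coordinate into at most $d$ scalar bins. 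Hence the scalar output $B\le(1+\delta)\OPT(I')+\Otilde(1/\delta)\le(d+\eps)\OPT(I)+\Otilde(d/\eps)$.

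Your ``shared summary'' is, at the data-structure level, exactly this single scalar instance; what you are missing is that once the summary is shared you no longer need per-group rounded instances or per-group optima at all --- the global scalar solution \emph{is} the output. That dissolves the global-vs.-per-group rank obstacle you flagged. Separately, your proposed fix of attaching $d$ group counters to each Greenwald--Khanna tuple is a multiplicative factor-$d$ space blowup, not something absorbed by the $\log(d/\eps)$ term, so it would put you back at $\Otilde(d^2/\eps)$.
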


\begin{proof}
Given an input stream $I$ of vectors, we create an input stream $I'$
for \textsc{Bin Packing} by replacing each vector $\mathbf{v}$ by a
single (scalar) item $a$ of size $\|\mathbf{v}\|_\infty$.
We use our streaming algorithm for \textsc{Bin Packing}
with precision $\delta = \frac{\eps}{d}$ which uses 
$\O\left(\frac{1}{\delta} \cdot \log \frac{1}{\delta} \cdot \log \OPT\right)$ memory
and returns a solution with at most $B = (1 + \delta)\cdot \OPT(I') + \Otilde(\frac{1}{\delta})$ scalar bins.
Clearly, $B$ bins are sufficient for the stream $I$ of vectors, since
in the solution for $I'$ we replace each item by the corresponding vector and obtain a valid
solution for $I$.

Finally, we show that $(1 + \delta)\cdot \OPT(I') + \O_\delta(1)\le (d+\eps)\cdot \OPT(I) + \Otilde(\frac{d}{\eps})$ for which
it is sufficient to prove that $\OPT(I')\le d\cdot \OPT(I)$ as $\delta = \frac{\eps}{d}$. 
Namely, from an optimal solution $\calS$ for $I$, we create a solution	
for $I'$ with at most $d\cdot \OPT(I)$ bins. For each bin $B$ in $\calS$, 
we split the vectors assigned to $B$ into $d$ groups according to the largest
dimension (chosen arbitrarily among those with the largest value)
and for each group $i$ we create bin $B_i$ with vectors in group $i$.
Then we just replace each vector $v$ by an item of size $\|v\|_\infty$
and obtain a valid solution for $I'$ with at most $d\cdot \OPT(I)$ bins.
\end{proof}

Interestingly, a better than $d$-approximation using sublinear memory,
which is rounding-based, is not possible,
due to the following result in~\cite{bansal16_vector_BP}.
(Note that the result requires that the numbers in the input vectors can take arbitrary 
values in $[0,1]$, i.e., vectors do not belong to a bounded universe.)

\begin{theorem}[Implied by the proof of Theorem~2.2 in~\cite{bansal16_vector_BP}]
Any algorithm for \textsc{Vector Bin Packing} that rounds up 
large coordinates of vectors to $o(N / d)$ types cannot achieve
better than $d$-approximation, where $N$ is the number of vectors.
\end{theorem}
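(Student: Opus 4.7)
I would build a family of \textsc{Vector Bin Packing} instances on which any algorithm that rounds up large coordinates to only $t = o(N/d)$ distinct vector types is provably $d$-suboptimal. The approach is the standard template for rounding lower bounds: exhibit an instance whose optimum exploits very fine-grained distinctions between vector entries, and then show via pigeonhole that no rounding with so few types can preserve the information needed for an efficient packing.

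The first step is to produce $N$ vectors in $d$ dimensions organised into $n := N/d$ ``complementary groups'' of size $d$, with the $d$ members of a single group summing to at most $\mathbf{1}$ coordinate-wise and hence fitting in a single bin. A natural realisation uses vectors of the shape $v_{g,i} = (1-\delta_g)\mathbf{e}_i + \tfrac{\delta_g}{d-1}(\mathbf{1}-\mathbf{e}_i)$, where the perturbation $\delta_g$ is unique to group $g$ and ranges over a rich subset of $(0,1)$. This yields $\OPT(I) \le n = N/d$. The freedom to draw the $\delta_g$ from a continuous range is exactly what the parenthetical remark preceding the theorem refers to.

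Next, I would analyse the consequence of rounding. Because rounding up can only increase each coordinate, and every $v_{g,i}$ already carries a coordinate of value close to $1$, every rounded vector retains a coordinate strictly above $\tfrac12$. Hence any two vectors receiving the same rounded type conflict on that coordinate, so at most one representative of each type can occupy any bin. Combined with $t = o(N/d)$ and $N$ vectors, a counting argument---split over the $d$ possible positions of the large coordinate, so that vectors of a fixed type also share the \emph{same} conflicting dimension---forces the algorithm to open $\Omega(N)$ bins, giving the ratio $\Omega(N)/(N/d) = \Omega(d)$.

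The crux will be the calibration step: one must choose the perturbations $\delta_g$, or more generally the coordinate profiles across groups, so that every partition of the $N$ vectors into $o(N/d)$ classes necessarily fuses two vectors that conflict in some coordinate above $\tfrac12$ (rather than being collapsed harmlessly across different ``large'' positions). This is the combinatorial heart of Bansal, Eli\'{a}\v{s}, and Khan's proof of Theorem~2.2 and is where most of the technical work lies; the pigeonhole and counting steps above then combine with the calibrated instance to deliver the claimed $d$-factor approximation lower bound essentially mechanically.
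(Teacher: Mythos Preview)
The paper does not give a proof of this theorem at all: it is stated with the attribution ``Implied by the proof of Theorem~2.2 in~\cite{bansal16_vector_BP}'' and nothing further. There is therefore no in-paper argument to compare your sketch against; the authors simply quote the result as a black box to justify why their streaming $d+\eps$ bound is essentially tight for rounding-based summaries.

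On the substance of your sketch, there is a genuine gap in the counting step. From ``every rounded type has a coordinate above $\tfrac12$'' you correctly deduce that at most one vector of any \emph{fixed} type can go in a bin. But that only bounds the number of vectors per bin by $t$ (one per type) or by $d$ (one per large-coordinate position), giving $\ALG \ge N/\min(t,d)$ and hence ratio $\ge d/\min(t,d)$, which is $1$ once $t\ge d$. To get a $d$-factor you need $\ALG \ge (1-o(1))N$, i.e., after rounding essentially \emph{no two} vectors can share a bin, including pairs whose large coordinates sit in \emph{different} dimensions. Your instance $v_{g,i} = (1-\delta_g)\mathbf{e}_i + \tfrac{\delta_g}{d-1}(\mathbf{1}-\mathbf{e}_i)$ does not provide this: for $i\ne j$ the pair $v_{g,i}, v_{g',j}$ fits together whenever $\delta_g/(d-1)\le \delta_{g'}\le (d-1)\delta_g$, and rounding up only the large coordinate does not break this. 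So the assertion ``forces the algorithm to open $\Omega(N)$ bins'' is not supported by the pigeonhole argument you describe; the cross-dimensional incompatibilities must be engineered by the instance itself, which is precisely the content you defer to Bansal--Eli\'a\v{s}--Khan. Your outline captures the right \emph{shape} (rounding destroys complementary structure), but the mechanism you spell out is not the one that actually delivers the bound.
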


It is an interesting open question whether or not we can design
a streaming $d+\eps$-approximation with $o(\frac{d}{\eps})$ memory
or even with $\Otilde\left(d + \oneovereps\right)$ memory.

\section{Vector Scheduling}\label{sec:vsch}

\subsection{Streaming Algorithm by Combining Small Jobs}
\label{sec:vsch-combiningAlg}

We provide a novel approach for creating an input summary for
\textsc{Vector Scheduling} (and hence also \textsc{Makespan
  Scheduling} by inclusion),
based on combining small items into containers,
which works well even in a large dimension.
Our streaming algorithm stores all big jobs and all containers,
created from small items, that are relatively big as well.
Thus, there is a bounded number of big jobs and containers,
and the space used is bounded as well.
We show that this simple summarization preserves the optimal makespan up
to a factor of $2-\frac{1}{m} + \eps$ for any $0<\eps\le 1$.
Take $m\ge 2$, since for $m=1$ there is a trivial streaming algorithm
that just sums up the vectors of all jobs to get the optimal makespan.
We assume that the algorithm knows (an upper bound on) $m$ in advance.

\paragraph{Algorithm description.}
For $0<\eps \le 1$ and $m\ge 2$, the algorithms works as follows: 
For each $k = 1, \dots, d$,
it keeps track of the total load of all jobs in dimension $k$, denoted $L_k$.
Note that the optimal makespan satisfies $\OPT\ge \max_k \frac1m\cdot L_k$.
Assume for simplicity that when a new job arrives, $\max_k \frac1m\cdot L_k = 1$;
if not, we rescale every quantity by this maximum.
Hence, the optimum makespan for jobs that arrived so far is at least one,
while $L_k \le m$ for any $k = 1, \dots, d$
(an alternative lower bound on $\OPT$ is the maximum $\ell_\infty$
norm of a job seen so far, but our algorithm does not use this).

Let $\gamma = \Theta\left(\eps^2 / \log \frac{d^2}{\eps}\right)$;
the constant hidden in $\Theta$ follows from the analysis below.
We also ensure that $\gamma \le \frac14\eps$.
We say that a job with vector $\mathbf{v}$ is \emph{big} if $\|\mathbf{v}\|_\infty > \gamma$;
otherwise it is \emph{small}.
The algorithm stores all big jobs (i.e., the full vector of each big job),
while it aggregates small jobs into containers, and does not store any small job directly.
A \emph{container} is simply a vector $\mathbf{c}$ that equals the sum of vectors for small jobs
assigned to this container, and we ensure that $\|\mathbf{c}\|_\infty \le 2\gamma$.
Furthermore, container $\mathbf{c}$ is closed if $\|\mathbf{c}\|_\infty > \gamma$,
otherwise, it is open.
As two open containers can be combined into one (open or closed) container,
we maintain only one open container. We execute 
a variant of the \textsc{Next Fit} algorithm to pack the containers, adding the incoming
small job into the open container, where it always fits
as any small vector $\mathbf{v}$ satisfies $\|\mathbf{v}\|_\infty \le \gamma$.
All containers are retained in the memory.

When a new small job arrives or when a big job becomes small,
we assign it in the open container. If this container becomes closed,
we open a new, empty one. 
Moreover, it may happen that a previously closed container becomes open
again. In this case, we combine open containers as long as we have at
least two of them. This completes the description of the algorithm.

For packing the containers, we may also use another algorithm, such as \textsc{First Fit},
which also packs small jobs into a closed container if it fits.
This may lead to having a lower number of containers in some cases.
However, our upper bound of $2 - \frac{1}{m} + \eps$
on the approximation factor lost by this summarization technique holds for
any algorithm for packing the containers, as long as they do not exceed
the capacity of containers, equal to $2\gamma$.
Moreover, the space bound holds if there is a bounded number of
containers with $\|\mathbf{c}\|_\infty \le \gamma$. 

\paragraph{Properties of the input summary.}
After all jobs are processed, we assume again that $\max_k \frac1m\cdot L_k = 1$,
which implies that $\OPT\ge 1$.
Since any big job and any closed container, each characterized by a vector $\mathbf{v}$, satisfy 
$\|\mathbf{v}\|_\infty > \gamma$, 
it holds that there are at most $\frac{1}{\gamma} \cdot d\cdot m$
big jobs and closed containers. As at most one container remains open
in the end and any job or container is described by $d$ numbers,
the space cost is $\O\left(\frac{1}{\gamma} \cdot d^2\cdot m\right) = 
\O\left(\oneoverepssquared \cdot d^2\cdot m\cdot \log \frac{d}{\eps}\right)$.

We now analyze the maximum
approximation factor that can be lost by this summarization.
Let $\IR$ be the resulting instance formed by big jobs and
containers with small items (i.e., the input summary), and let 
$I$ be the original instance, consisting of jobs in the input stream.
We show that $\OPT(\IR)$ and $\OPT(I)$ are close together, up to a factor of $2 - \frac{1}{m} + \eps$.
Note, however, that we still need to execute an offline algorithm
to get (an approximation of) $\OPT(\IR)$, which is not an explicit part of the summary.

The crucial part of the proof is to show that containers for small items
can be assigned to machines so that the loads of all machines are nearly balanced
in every dimension,
especially in the case when containers constitute a large fraction of the total
load of all jobs.
To capture this, let $L^{\mathrm{C}}_k$ be the total load of containers
in dimension $k$ (equal to the total load of small jobs).
Let $\IC\subseteq \IR$ be the instance consisting of all containers in $\IR$.
The following lemma establishes the approximation factor. 

\begin{lemma}
Supposing that $\max_k \frac1m\cdot L_k = 1$, the following holds:
\begin{enumerate}[nosep,label=(\roman*)]
\item There is a solution for instance $\IC$ 
with load at most $\max(\half, \frac{1}{m}\cdot L^{\mathrm{C}}_k) +2\eps+4\gamma$
in each dimension $k$ on every machine.
\item $\OPT(I)\le \OPT(\IR)\le \left(2 - \frac{1}{m} + 3\eps\right)\cdot \OPT(I)$.
\end{enumerate}
\end{lemma}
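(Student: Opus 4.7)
The plan handles parts~(i) and~(ii) in sequence.

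For part~(i), I would argue via the probabilistic method: assign each container to one of the $m$ machines independently and uniformly at random. For any fixed machine $j$ and dimension $k$, the container load $L_{j,k}^{\mathrm{C}}$ is a sum of independent random variables bounded by $2\gamma$, with expectation $L^{\mathrm{C}}_k/m$. Bernstein's inequality then gives $\Pr\bigl[L_{j,k}^{\mathrm{C}} > L^{\mathrm{C}}_k/m + t\bigr] \le \exp(-\Omega(t^2/\gamma))$ in the relevant regime. Choosing $t=2\eps$ and using $\gamma = \Theta(\eps^2/\log(d^2/\eps))$ (with the hidden constant chosen large enough) drives this probability small enough to union-bound over all $md$ machine-dimension pairs and extract a deterministic assignment in which $L_{j,k}^{\mathrm{C}} \le L^{\mathrm{C}}_k/m + 2\eps + 4\gamma$ for every $j,k$. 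The $\max(\half, L^{\mathrm{C}}_k/m)$ form merely accommodates two regimes: when $L^{\mathrm{C}}_k/m \ge \half$ concentration around the mean suffices, while for smaller $L^{\mathrm{C}}_k/m$ the deviation cannot push the load above $\half + 2\eps + 4\gamma$ in any case. A subtle point is that when $m$ is very large compared to $\poly(d/\eps)$, the union bound requires some care; this can be handled by noting that the total number of closed containers is $\O(dm/\gamma)$, so for machines receiving few containers the bound is vacuous or follows from a simpler additive Chernoff argument.

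For part~(ii), the lower bound $\OPT(I)\le \OPT(\IR)$ is immediate: given any schedule of $\IR$ with makespan $M$, replace every container by the small jobs it was built from, keeping them on the same machine; per-dimension loads are unchanged, so the resulting schedule of $I$ has makespan~$M$.

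For the upper bound, let $T=\OPT(I)$, so $T\ge 1$ under the rescaling assumption. The construction is two-layered: schedule the big jobs of $\IR$ (which coincide with those of $I$) via an optimal schedule of the big-job sub-instance, giving big-job load at most $T$ per machine per dimension; then schedule the containers via part~(i). A case analysis in each dimension $k$ closes the argument. If $L^{\mathrm{C}}_k/m \le \half$, the container contribution is at most $\half + 2\eps + 4\gamma$, and since $\half \le (1-\tfrac1m)T$ for $m\ge 2$ and $T\ge 1$, the total load is at most $(2-\tfrac1m+3\eps)T$. The more delicate case is $L^{\mathrm{C}}_k/m > \half$, where the naive combination only gives $2T + 3\eps$ and misses the $-\tfrac1m$ savings. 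My plan here is to refine the big-job schedule so that its load in dimension $k$ on every machine is bounded by $L^{\mathrm{B}}_k/m + (1-\tfrac1m)T$ (a per-dimension Graham-style bound, which is licit since every big coordinate is at most $T$); combining with container load at most $L^{\mathrm{C}}_k/m + 2\eps + 4\gamma$, the total becomes $L_k/m + (1-\tfrac1m)T + 3\eps \le (2-\tfrac1m+3\eps)T$.

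The main obstacle is exactly this per-dimension Graham-style bound for the big-job schedule, since plain vector list scheduling does not guarantee it simultaneously in all dimensions. I would attempt a further subdivision of the hard case: if some big job has a coordinate close to $T$ in dimension $k$, then $L^{\mathrm{B}}_k \ge T$ forces $L^{\mathrm{C}}_k \le (m-1)T$ and the naive bound already meets the target; otherwise every big coordinate in dimension $k$ is small, and a scalar list-scheduling argument in dimension $k$ yields the required bound, while the excess it creates in other dimensions is still controlled by the optimal makespan $T$ of the big-job sub-instance. Formalising this interplay --- in particular producing a single schedule that works uniformly for all dimensions falling in the hard case --- is the step requiring the most care.
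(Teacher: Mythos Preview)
Your plan for part~(i) has a genuine gap. With $\gamma = \Theta(\eps^2/\log(d^2/\eps))$, the concentration bound you obtain for a single machine--dimension pair is only $\exp(-\Omega(\eps^2/\gamma)) = \poly(\eps/d)$, with no dependence on~$m$. A union bound over $md$ pairs therefore fails once $m$ is large relative to $\poly(d/\eps)$, and your suggested patch (``machines receiving few containers'') does not resolve this: every machine receives $\Theta(d/\gamma)$ containers in expectation, which is already large. The paper sidesteps the union over machines entirely. It also assigns each container uniformly at random, but any container that would push its machine above the threshold is handed to a second-stage \textsc{Greedy}. The Chernoff bound is then applied \emph{per container} (union only over the $d$ dimensions), giving reassignment probability at most $\eps/d$; hence the expected total $\ell_1$ mass sent to \textsc{Greedy} is at most $\eps m$, and a standard argument shows \textsc{Greedy} spreads this with makespan at most $\eps+2\gamma$. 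The point is that the expectation argument for the overflow scales with $m$ automatically, so no union bound over machines is ever needed.

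For part~(ii), your ``hard case'' $L^{\mathrm{C}}_k/m>\tfrac12$ is in fact the easy one, and you are working much harder than necessary. Let $\mu_k$ be the maximum load in dimension~$k$ in the optimal big-job schedule $\calS_{\mathrm{B}}$. Since $\mu_k$ is a max over machines whose sum is the total big-job load $L^{\mathrm{B}}_k$, we have $\mu_k\le L^{\mathrm{B}}_k$, and because $L^{\mathrm{B}}_k+L^{\mathrm{C}}_k=L_k\le m$ we get $L^{\mathrm{C}}_k\le m-\mu_k$. Then the combined load in dimension~$k$ on any machine is at most
\[
\mu_k + \frac{L^{\mathrm{C}}_k}{m} + 2\eps+4\gamma
\;\le\; \mu_k + 1 - \frac{\mu_k}{m} + 3\eps
\;=\; \Bigl(1-\frac{1}{m}\Bigr)\mu_k + 1 + 3\eps
\;\le\; \Bigl(2-\frac{1}{m}+3\eps\Bigr)\OPT(I),
\]
using $\mu_k\le\OPT(I)$ and $\OPT(I)\ge 1$. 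No per-dimension Graham bound on $\calS_{\mathrm{B}}$ is required; the $-\tfrac1m$ saving comes for free from the mass-balance relation between big and small loads.
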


\begin{proof}
(i) We obtain the solution from the randomized online algorithm by
Im \etal~\cite{im19_tight_bounds_online_vector_sch}.
Although this algorithm has ratio $\O(\log d / \log \log d)$ on general instances,
we show that it behaves substantially better when jobs are small enough.
In a nutshell, this algorithm works by first assigning each job $j$ to a uniformly random machine $i$ and if the load of machine $i$ exceeds a certain threshold, then the job
is reassigned by \textsc{Greedy}.
The online \textsc{Greedy} algorithm works by assigning jobs one by one,
each to a machine so that the makespan increases as little as possible
(breaking ties arbitrarily).

Let $L'_k = \max(\half, \frac{1}{m}\cdot L^{\mathrm{C}}_k)$.
We assume that each machine has its capacity of $L'_k+2\eps+4\gamma$
in each dimension $k$ split into two parts:
The first part has capacity $L'_k+\eps+2\gamma$ in dimension $k$
for the containers assigned randomly, and the second part has
capacity $\eps+2\gamma$ in all dimensions for the containers assigned by \textsc{Greedy}.
Note that \textsc{Greedy} cares about the load in the second part only.

The algorithm assigns containers one by one as follows: For each container $\mathbf{c}$,
it first chooses a machine $i$ uniformly and independently at random.
If the load of the first part of machine $i$
already exceeds $L'_k+\eps$ in some dimension $k$, then $\mathbf{c}$ is passed to \textsc{Greedy},
which assigns it according to the loads in the second part.
Otherwise, the algorithm assigns $\mathbf{c}$ to machine $i$.

As each container $\mathbf{c}$ satisfies $\|\mathbf{c}\|_\infty \le 2\gamma$,
it holds that randomly assigned containers fit into capacity $L'_k+\eps+2\gamma$
in any dimension $k$ on any machine.
We show that the expected amount of containers assigned by \textsc{Greedy} is small enough
so that they fit into machines with capacity of $\eps+2\gamma$,
which in turn implies that there is a choice
of random bits for the assignment so that the capacity for \textsc{Greedy} is not exceeded.
The existence of a solution with capacity $L'_k+2\eps+4\gamma$ in each dimension $k$ will follow.

Consider a container $\mathbf{c}$ and let $i$ be the machine chosen randomly for $\mathbf{c}$.
We claim that for any dimension $k$, the load on machine $i$ in dimension $k$,
assigned before processing $\mathbf{c}$, exceeds $L'_k+\eps$ 
with probability of at most $\frac{\eps}{d^2}$.
To show the claim, we use the following Chernoff-Hoeffding bound:
\begin{fact}
  Let $X_1, \dots, X_n$ be independent binary random variables and let
$a_1, \dots, a_n$ be coefficients in $[0,1]$. Let $X = \sum_i a_i X_i$.
Then, for any $0<\delta\le 1$ and any $\mu \ge \E[X]$, it holds that
$\Pr[X > (1+\delta)\cdot \mu] \le \exp\left(-\frac13\cdot \eps^2\cdot \mu\right)$.
\end{fact}

We use this bound with variable $X_{\mathbf{c'}}$
for each vector $\mathbf{c'}$ assigned randomly before vector $\mathbf{c}$ and not reassigned by \textsc{Greedy}.
We have $X_{\mathbf{c'}} = 1$ if $\mathbf{c'}$ is assigned on machine $i$.
Let $a_{\mathbf{c'}} = \frac{1}{2\gamma}\cdot \mathbf{c'}_k \le 1$. 
Let $X = \sum_{\mathbf{c'}} a_{\mathbf{c'}} X_{\mathbf{c'}}$ be the random variable
equal to the load on machine $i$ in dimension $k$, scaled by $\frac{1}{2\gamma}$.
It holds that
$\E[X] \le \frac{1}{m}\cdot \frac{1}{2\gamma}\cdot L'_k\cdot m = \frac{1}{2\gamma}\cdot L'_k$,
since each container $\mathbf{c'}$ is assigned to machine $i$ with probability $\frac{1}{m}$
and $L'_k\cdot m$ is the upper bound on the total load of containers in dimension $k$.
Using the Chernoff-Hoeffding bound with $\mu = \frac{1}{2\gamma}\cdot L'_k$ and $\delta = \eps\le 1$,
we get that
$$\Pr[X > (1+\eps)\cdot \frac{1}{2\gamma}\cdot L'_k]
	\le \exp\left(-\frac13\cdot \eps^2\cdot \frac{1}{2\gamma}\cdot L'_k\right)\,.$$
Using $\gamma = \O\left(\eps^2 / \log \frac{d^2}{\eps}\right)$ and $L'_k\ge \half$,
we obtain
$$\exp\left(-\frac13\cdot \eps^2\cdot \frac{1}{2\gamma}\cdot L'_k\right) \le
\exp\left(-\Omega\left(\log \frac{d^2}{\eps}\right)\right) \le \frac{\eps}{d^2}\,,$$
where the last inequality holds for a suitable choice of the multiplicative constant
in the definition of $\gamma$.
This is sufficient to show the claim as $X >(1+\eps)\cdot \frac{1}{2\gamma}\cdot L'_k$ if and only if
the load on machine $i$ in dimension $k$, assigned randomly before $\mathbf{c}$,
exceeds $(1+\eps)\cdot L'_k$.

By the union bound, the claim implies that each container $\mathbf{c}$ is
reassigned by \textsc{Greedy} with probability at most $\frac{\eps}{d}$. 
Let $G$ be the random variable equal to the sum of the $\ell_1$ norms
(where $\|\mathbf{c}\|_1 = \sum_{k=1}^d \mathbf{c}_k$)
of containers assigned by \textsc{Greedy}.
Using the linearity of expectation and the claim,
we have
$$\E[G] \le \sum_{\mathbf{c}} \frac{\eps}{d}\cdot \|\mathbf{c}\|_1
  \le \frac{\eps}{d} \cdot m\cdot d = \eps\cdot m\,,$$
where the second inequality uses that the total load of containers in each dimension 
is at most $m$.
Let $\mu_{\mathbf{G}}$ be the makespan of the containers created by \textsc{Greedy}.
Observe that each machine has a dimension with load at least $\mu_{\mathbf{G}} - 2\gamma$.
Indeed, otherwise, if there is a machine $i$ with load less than
$\mu_{\mathbf{G}} - 2\gamma$ in all coordinates,
the last container $\mathbf{c}$ assigned by \textsc{Greedy}
that caused the increase of the makespan to $\mu_{\mathbf{G}}$
would be assigned to machine $i$, and the makespan
after assigning $\mathbf{c}$ would be smaller than $\mu_{\mathbf{G}}$
(using $\|\mathbf{c}\|_\infty \le 2\gamma$).
It follows that $\mu_{\mathbf{G}} - 2\gamma \le \frac{1}{m}\cdot G$
and, using $\E[G]\le \eps\cdot m$,
we get that $\E[\mu_{\mathbf{G}}] - 2\gamma\le \eps$.
This concludes the proof that~(i) holds.

\medskip
(ii) The first inequality is straightforward as any solution for $\IR$ can 
be used as a solution for $I$, just packing small items first in containers
that the algorithm created
and then the containers according to the solution for $\IR$.

To show the upper bound, we create a solution of $\IR$ of makespan
at most $\left(2 - \frac{1}{m} + 3\eps\right)\cdot \OPT(I)$ as follows:
We take an optimal solution $\calS_{\mathrm{B}}$ for instance $\IR\setminus\IC$, i.e.,
for big jobs only,
and combine it with solution $\calS_{\mathrm{C}}$ for containers from~(i),
to obtain a solution $\calS$ for $\IR$
(interestingly, the machine loads from $\calS_{\mathrm{B}}$
and $\calS_{\mathrm{C}}$ can be combined in an arbitrary way).
Let $\mu_k$ be the largest load assigned to a machine in dimension $k$ in solution $\calS_{\mathrm{B}}$;
we have $\mu_k\le \OPT(I)$.
Note that $L^{\mathrm{C}}_k \le m - \mu_k$, since the total load 
of big jobs and containers together is at most $m$.
Consider the load on machine $i$ in dimension $k$ in solution $\calS$.
If $\frac{1}{m}\cdot L^{\mathrm{C}}_k \ge \half$, then this load
is bounded by 
$\mu_k + \frac{1}{m}\cdot L^{\mathrm{C}}_k +2\eps+4\gamma
\le \mu_k + \frac{1}{m}\cdot (m - \mu_k) +3\eps
= \left(1 - \frac{1}{m}\right)\cdot \mu_k + 1+3\eps
\le \left(2 - \frac{1}{m} + 3\eps\right)\cdot \OPT(I),$
where the first inequality uses $L^{\mathrm{C}}_k \le m - \mu_k$
and $\gamma \le \frac14\eps$ (ensured by the definition of $\gamma$),
and the last inequality holds by $\mu_k\le \OPT(I)$.

Otherwise, $\frac{1}{m}\cdot L^{\mathrm{C}}_k < \half$,
in which case the load on machine $i$ in dimension $k$
is at most $\mu_k + \half +2\eps+4\gamma \le (1.5 + 3\eps)\cdot \OPT(I) \le (2 - \frac{1}{m} + 3\eps)\cdot \OPT(I)$,
using similar arguments as in the previous case and $m\ge 2$.
\end{proof}

It remains open whether or not the above algorithm with $\gamma = \Theta(\eps)$
also gives $(2-\frac{1}{m}+\eps)$-app\-ro\-xi\-ma\-tion, which would imply
a better space bound of $\O(\oneovereps\cdot d^2\cdot m)$.
On the other hand, we now give an example
showing that the approximation guarantee is at least $2-\frac{1}{m}$ for this approach.

\paragraph{Tight example for the algorithm.}
For any $m\ge 2$, we present an instance $I$ in $d = m+1$ dimensions
such that $\OPT(I) = 1$, but $\OPT(\IR) \ge 2-\frac{1}{m}$, where $\IR$
is the instance created by our algorithm.

Let $\gamma$ be as in the algorithm and assume for simplicity that $\frac{1}{\gamma}$
is an integer. 
First, $m$ big jobs with vectors $\mathbf{v^1}, \dots, \mathbf{v^m}$ arrive,
where $\mathbf{v^i}$ is a vector with dimensions $i$ and $m+1$ equal to $1$ 
and with zeros in the other dimensions (that is, $\mathbf{v^i_{\mathit{i}}} = 1$ and $\mathbf{v^i_{\mathit{m+1}}} = 1$,
while $\mathbf{v^i_{\mathit{k}}} = 0$ for $k\notin \{i, m+1\}$).
Then, small jobs arrive in groups of $d-1 = m$ jobs and there are $(m-1)\cdot \frac{1}{\gamma}$ groups.
Each group consists of items $(\gamma, 0, \dots, 0, 0), (0, \gamma, \dots, 0, 0),
\dots, (0, 0, \dots, \gamma, 0)$, i.e, for each $i = 1, \dots, d-1$,
it contains one item with value $\gamma$ in coordinate $i$ and with zeros in other dimensions.
The groups arrive one by one, with an arbitrary ordering inside the group.
Note, however, that these jobs with $\ell_\infty$ norm equal to $\gamma$ become small for the algorithm
only once the first job from the last group arrives
as they are compared to the total load in each dimension, which increases gradually.
When they become small,
the algorithm will combine each group into one container $(\gamma, \gamma, \dots, \gamma, 0)$,
which can be achieved by processing the jobs in their arrival order
and by having the last vector of the group larger by an infinitesimal amount
(we do not take these infinitesimals into account in further calculations).
Thus, $\IR$ consists of $m$ big jobs and $(m-1)\cdot \frac{1}{\gamma}$
containers $(\gamma, \gamma, \dots, \gamma, 0)$.

Observe that $\OPT(I) = 1$, since in the optimal solution, each machine $i$ is assigned big job $\mathbf{v^i}$
and $\frac{1}{\gamma}$ small jobs with $\gamma$ in dimension $k$ for each $k \in \{1, \dots, d-1\} \setminus \{i\}$.
Thus the load equals one on any machine and dimension.

We claim that $\OPT(\IR) \ge 2-\frac{1}{m}$. Indeed, only one big job
can be assigned on one machine, as all of them have value one in dimension $m+1$,
so each machine contains one big job. Observe that 
some machine gets at least $\frac{m-1}{m}\cdot \frac{1}{\gamma}$ containers
and thus, it has load of at least $2-\frac{1}{m}$ in one of the $d-1$ first dimensions,
which shows the claim.

Note that for this instance to show ratio $2-\frac{1}{m}$ it suffices
that the algorithms creates $(m-1)\cdot \frac{1}{\gamma}$ containers $(\gamma, \gamma, \dots, \gamma, 0)$.
This can be enforced for various greedy algorithms used for packing the small jobs into containers.
We conclude that we need a different approach for input summarization to get a ratio
below $2-\frac{1}{m}$.

\subsection{Rounding Algorithms for Constant Dimension}
\label{app:vsch-roudingAlgs}

\paragraph{Makespan Scheduling.}
We start by outlining a simple streaming algorithm for $d=1$ based on rounding.
Here, each job $j$ on input is characterized by its processing time $p_j$ only.
The algorithm uses the size of the largest job seen so far,
denoted $p_{\mathrm{max}}$, as a lower bound on the optimum makespan. 
This makes the rounding procedure (and hence, the input summary) oblivious of $m$, the number of machines,
which is in contrast with the algorithm in Section~\ref{sec:vsch}
that uses just the sum of job sizes as the lower bound.

The rounding works as follows: 
Let $q$ be an integer such that $p_{\mathrm{max}}\in ((1+\eps)^q, (1+\eps)^{q+1}]$, and
let $k = \lceil\log_{1+\eps} \frac{1}{\eps}\rceil = \O(\oneovereps \log \frac{1}{\eps})$.
A job is \emph{big} if its size exceeds $(1+\eps)^{q-k}$; note that 
any big job is larger than $\eps\cdot p_{\mathrm{max}} / (1+\eps)^2$.
All other jobs are \emph{small} and have size less than $\eps\cdot p_{\mathrm{max}}$.
The algorithm maintains one variable $s$ for the total size of all small jobs
and variables $L_i$, $i=q-k, \dots, q$, for the number of big jobs with size in $((1+\eps)^i, (1+\eps)^{i+1}]$
(note that this interval is \emph{not} scaled by $p_{\mathrm{max}}$, i.e.,
increasing $p_{\mathrm{max}}$ slightly does not move the intervals).

Maintaining these variables when a new job arrives can be done in a straightforward way.
In particular, when an increase of $p_{\mathrm{max}}$ causes that
$q$ increases (by $1$ or more as it is integral), 
we discard all variables $L_i$ that do not correspond to big jobs any more,
and account for previously big jobs that are now small in variable $s$.
However, as the size of these jobs was rounded to a power of $1+\eps$,
variable $s$ can differ from the exact total size of small jobs
by a factor of at most $1+\eps$. 

The created input summary, consisting of $\O(\oneovereps \log \oneovereps)$ variables $L_i$
and variable $s$, preserves the optimal value
up to a factor of $1+\O(\eps)$. This follows, since big jobs are stored with size rounded up to the nearest power
of $1+\eps$, and, although we just know the approximate total size of small jobs,
they can be taken into account similarly as when calculating a bound on the number of bins
in our algorithm for \textsc{Bin Packing}.

\paragraph{Vector Scheduling.}
We describe the rounding introduced by Bansal \etal~\cite{bansal16_apx_vector_sch_eptas},
which we can adjust into a streaming $1+\eps$-approximation for \textsc{Vector Scheduling} in constant dimension.
The downside of this approach is that it requires memory exceeding $\left(\frac{2}{\eps}\right)^d$,
which becomes unfeasible even for $\eps=1$ and $d$ being a relatively small constant.
Moreover, such an amount of memory may be needed also in the case of a small number of machines.

We first use the following lemma by Chekuri and Khanna~\cite{chekuri04_multidim_packing},
where $\delta = \frac{\eps}{d}$:

\begin{lemma}[Lemma~2.1 in~\cite{chekuri04_multidim_packing}]\label{lem:roundingSmallCoords}
Let $I$ be an instance of \textsc{Vector Scheduling}. Let $I'$
be a modified instance where we replace each vector $\mathbf{v}$
by vector $\mathbf{v'}$ as follows: For each $1\le i\le d$, if $\mathbf{v}_i > \delta \|\mathbf{v}\|_\infty$,
then $\mathbf{v'}_i = \mathbf{v}_i$; otherwise, $\mathbf{v'}_i = 0$. 
Let $\calS'$ be any solution for $I'$.
Then, if we replace each vector $\mathbf{v'}$ in $\calS'$ by its counterpart in $I$,
we get a solution of $I$ with makespan at most $1+\eps$ times the makespan of $\calS'$.
\end{lemma}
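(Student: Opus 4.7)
The plan is to fix an arbitrary machine $M$ and dimension $k$ and bound the dimension-$k$ load on $M$ under the reconstructed solution $\calS$ by $(1+\eps)\mu'$, where $\mu'$ denotes the makespan of $\calS'$; taking a maximum over $M$ and $k$ then yields the lemma.

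First I would decompose the dimension-$k$ load on $M$ under $\calS$ into two parts: the contribution from vectors whose $k$-th coordinate survived the rounding (which equals the dimension-$k$ load on $M$ under $\calS'$ and is therefore at most $\mu'$), and the \emph{restored mass} $R_k(M) := \sum \mathbf{v}_k$ summed over vectors $\mathbf{v}$ assigned to $M$ with $\mathbf{v}_k \le \delta\|\mathbf{v}\|_\infty$. It remains to show $R_k(M) \le \eps\,\mu'$.

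The key observation is that since $\delta < 1$, the coordinate attaining $\|\mathbf{v}\|_\infty$ is strictly above the rounding threshold $\delta\|\mathbf{v}\|_\infty$ and is therefore never zeroed out, so $\|\mathbf{v}\|_\infty = \|\mathbf{v'}\|_\infty \le \|\mathbf{v'}\|_1$. Plugging the per-vector inequality $\mathbf{v}_k \le \delta\|\mathbf{v}\|_\infty$ into the definition of $R_k(M)$ and swapping the order of summation gives
$$R_k(M) \;\le\; \delta \sum_{\mathbf{v}\text{ on }M} \|\mathbf{v'}\|_1 \;=\; \delta \sum_{j=1}^{d} L'_j(M),$$
where $L'_j(M)$ denotes the dimension-$j$ load on $M$ under $\calS'$. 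Each $L'_j(M)$ is at most $\mu'$, so $R_k(M) \le d\,\delta\,\mu' = \eps\,\mu'$ using $\delta = \eps/d$. Adding the two parts gives a dimension-$k$ load of at most $(1+\eps)\mu'$ on $M$, as desired.

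The only subtlety, and thus the main (mild) obstacle, is that the per-vector bound $\mathbf{v}_k \le \delta\|\mathbf{v}\|_\infty$ must be amortised not against the dimension-$k$ load of $\calS'$ but against the \emph{sum} of loads over all $d$ dimensions of $\calS'$; this is forced because $\|\mathbf{v}\|_\infty$ might be attained in a dimension other than $k$. This step loses a factor of $d$ relative to $\mu'$ and explains why the statement of the lemma uses the tighter threshold $\delta = \eps/d$ rather than simply $\delta = \eps$.
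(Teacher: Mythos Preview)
Your argument is correct. The paper does not actually give its own proof of this lemma: it is quoted verbatim as Lemma~2.1 of Chekuri and Khanna~\cite{chekuri04_multidim_packing} and used as a black box, so there is no in-paper proof to compare against. Your proof is the standard one: on any fixed machine and dimension, the load splits into the surviving part (bounded by the makespan $\mu'$ of $\calS'$) and the restored part, and the restored part is controlled by $\delta$ times the $\ell_1$-load of $\calS'$ on that machine, which is at most $d\mu'$; with $\delta=\eps/d$ this gives the $(1+\eps)$ factor. The only point to make explicit is the edge case $\delta=1$ (or $\mathbf{v}=\mathbf{0}$) when arguing $\|\mathbf{v}\|_\infty=\|\mathbf{v'}\|_\infty$, but with $\delta=\eps/d<1$ this is harmless.
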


In the following, we assume that the algorithms receives vectors from instance $I'$,
created as in Lemma~\ref{lem:roundingSmallCoords}. 
Let $p_{\mathrm{max}}$ be the maximum $\ell_\infty$ norm over all vectors
that arrived so far; we use it as a lower bound on $\OPT$.
We again do not use the total volume in each dimension as a lower bound,
which makes the input summarization oblivious of $m$.
A job, characterized by vector $\mathbf{v}$,
is said to be \emph{big} if $\|\mathbf{v}\|_\infty > \delta\cdot p_{\mathrm{max}}$;
otherwise, $\mathbf{v}$ is \emph{small}.

We round all values in big jobs to the powers of $1+\eps$.
By Lemma~\ref{lem:roundingSmallCoords}, we have that
either $\mathbf{v}_k > \delta^2\cdot p_{\mathrm{max}}$ or $\mathbf{v}_k = 0$
for any big $\mathbf{v}$ and dimension $k$, thus there are 
$\left\lceil \log_{1+\eps} \frac{1}{\delta^2}\right\rceil^d =
\O\left(\left(\frac{2}{\eps} \log \frac{d}{\eps}\right)^d\right)$
types of big jobs at any time. 
We have one variable $L_\mathbf{t}$ counting the number of
jobs for each big type $\mathbf{t}$, where $\mathbf{t}$
is an integer vector consisting of the exponents, i.e.,
if $\mathbf{v}$ is a big vector of type $\mathbf{t}$,
then $\mathbf{v}_i\in \left((1+\eps)^{\mathbf{t}_i}, (1+\eps)^{\mathbf{t}_i+1}\right]$
(we set $\mathbf{t}_i = -\infty$ if $\mathbf{v}_i = 0$).
As in the 1-dimensional case, big types change over time,
when $p_{\mathrm{max}}$ (sufficiently) increases.

Note that small jobs cannot be rounded to powers of $1+\eps$ directly.
Instead, they are rounded relative to their $\ell_\infty$ norms.
More precisely, consider a small vector $\mathbf{v}$ and let $\gamma = \|\mathbf{v}\|_\infty$.  
For each dimension $k$, if $\mathbf{v}_k > 0$, let $\mathbf{t}_k\ge 0$ be the largest integer
such that $\mathbf{v}_k \le \gamma\cdot (1+\eps)^{-\mathbf{t}_i}$, and
if $\mathbf{v}_i = 0$, we set $\mathbf{t}_i$ to $\infty$.
Then $(\mathbf{t}_1, \dots, \mathbf{t}_d)$ is the type of small vector $\mathbf{v}$.
Observe that small types do not change over time and there are 
at most $\O\left(\left(\frac{1}{\eps} \log \frac{d}{\eps}\right)^d\right)$ of them.
For each small type $\mathbf{t}$, we have one variable $s_\mathbf{t}$ counting the sum of the $\ell_\infty$ norms of
all small jobs of that type.

The variables can be maintained in an online fashion. 
Namely, when $p_{\mathrm{max}}$ increases, the types for previously
big jobs that are now small are discarded, while the jobs that become small
are accounted for in small types. For each such former big type $\mathbf{t}$,
we compute the corresponding small type as follows:
Let $\delta = \|\mathbf{t}\|_\infty$ be the maximum value in $\mathbf{t}$ (which is not $-\infty$).
The corresponding small type $\mathbf{\hat{t}}$ has then
$\mathbf{\hat{t}}_i = \delta - \mathbf{t}_i$ if $\mathbf{t}_i \neq -\infty$,
and $\mathbf{\hat{t}}_i = \infty$ otherwise.
Then we increase $s_\mathbf{\hat{t}}$ by $L_\mathbf{t}\cdot (1+\eps)^{\delta+1}$.

There are two types of errors introduced due to maintaining variables 
in the streaming scenario and not offline, where we know the final value of $p_{\mathrm{max}}$
in advance. 
First, it may happen that a vector $\mathbf{v}$ that was big upon its arrival becomes small,
and the small type of $\mathbf{v}$
is different than the small type computed for the former big type of $\mathbf{v}$
(i.e., the small type of $\mathbf{v}$ with values rounded to powers of $1+\eps$).
Second, the sum of $\ell_\infty$ norms
of small vectors of a small type $\mathbf{t}$ is in $\big(s_\mathbf{t} / (1+\eps), s_\mathbf{t}]$,
and moreover, the error in some dimension $i$ with $\mathbf{t}_i > 0$ (i.e., not the
largest one for this type) may be of factor up to $(1+\eps)^2$, since
we may round such a dimension two times for some jobs.
Note, however, that by giving up a factor of $1+\O(\eps)$, we may 
disregard both issues.

The offline algorithm of Bansal \etal~\cite{bansal16_apx_vector_sch_eptas}
implies that such an input summary, consisting of variables for both small and big types,
is sufficient for computing $1+\eps$-approximation.

\bigskip

\paragraph{Acknowledgments.}
The work is supported by 
European Research Council grant ERC-2014-CoG 647557.
The authors wish to thank Michael Shekelyan for fruitful discussions.

\bibliographystyle{plain}
\bibliography{references-BP}

\end{document}